\newtheorem{prop}{\sc Proposition}[section]
\newcommand{\dbl}{\setstretch{1.5}}
\newcommand{\sgl}{\setstretch{1.1}}
\newcommand{\bs}[1]{\boldsymbol{#1}}
\newcommand{\mc}[1]{\mathcal{#1}}
\newcommand{\mr}[1]{\mathrm{#1}}
\newcommand{\bm}[1]{\mathbf{#1}}
\newcommand{\ds}[1]{\mathds{#1}}
\newcommand{\indep}{\perp\!\!\!\perp}
\newcommand{\code}[1]{\textsf{\small #1}}
\begin{document}

\sgl

\pagestyle{empty}
\noindent {\Large \bf{Multinomial Inverse Regression for Text Analysis}}
\vskip 2cm

\noindent{\large Matt Taddy}\\
{\tt taddy@chicagobooth.edu}\\
{\it The  University of Chicago Booth School of Business}

\vskip 2cm

{\small 
\noindent{\sc Abstract:} 
Text data, including speeches, stories, and other document forms, are
often connected to {\it sentiment} variables that are of interest for
research in marketing, economics, and elsewhere.  It is also very high
dimensional and difficult to incorporate into statistical analyses.
This article introduces a straightforward framework of
sentiment-sufficient dimension reduction for text data.  Multinomial
inverse regression is introduced as a general tool for simplifying
predictor sets that can be represented as draws from a multinomial
distribution, and we show that logistic regression of phrase counts
onto document annotations can be used to obtain low dimension document
representations that are rich in sentiment information.  To facilitate
this modeling, a novel estimation technique is developed
for multinomial logistic regression with very high-dimension response.
In particular, independent Laplace priors with unknown variance are
assigned to each regression coefficient, and we detail an efficient
routine for maximization of the joint posterior over coefficients and
their prior scale.  This `gamma-lasso' scheme yields stable and effective estimation
for general high-dimension logistic regression, and we argue that it
will be superior to current methods in many settings.  Guidelines for
prior specification are provided, algorithm convergence is detailed,
and estimator properties are outlined from the perspective of the
literature on non-concave likelihood penalization.  Related work on
sentiment analysis from statistics, econometrics, and machine learning
is surveyed and connected.  Finally, the methods are applied in two
detailed examples and we provide out-of-sample prediction studies to
illustrate their effectiveness.  }

\vskip 2cm

\noindent {\footnotesize Taddy is an Associate Professor of
  Econometrics and Statistics and Neubauer Family Faculty Fellow at
  the University of Chicago Booth School of Business, and this work
  was partially supported by the IBM Corporation Faculty Research Fund
  at Chicago.  The author thanks Jesse Shapiro, Matthew Gentzkow,
  David Blei, Che-Lin Su, Christian Hansen, Robert Gramacy, Nicholas
  Polson, and anonymous reviewers for much helpful discussion.}

\newpage
\dbl

\pagestyle{plain}
\vskip 1cm
\section{Introduction}
\label{intro}

This article investigates the relationship between text data --
product reviews, political speech, financial news, or a personal blog
post -- and variables that are believed to influence its composition
-- product quality ratings, political affiliation, stock price, or
mood polarity.  Such language-motivating observable variables,
generically termed {\it sentiment} in the context of this article, are
often the main object of interest for text mining applications.  When,
as is typical, large amounts of text are available but only a small
subset of documents are annotated with known sentiment, this
relationship yields the powerful potential for text to act as a
stand-in for related quantities of primary interest.  On the other
hand, language data dimension (i.e., vocabulary size) is both very
large and tends to increase with the amount of observed text, making
the data difficult to incorporate into statistical analyses.  Our goal
is to introduce a straightforward framework of sentiment-preserving
dimension reduction for text data.

As detailed in Section \ref{background}.1, a common statistical
treatment of text views each document as an exchangeable collection of
phrase tokens.  In machine learning, these tokens are usually just
words (e.g., {\it tax}, {\it pizza}) obtained after stemming for
related roots (e.g., {\it taxation}, {\it taxing}, and {\it taxes} all
become {\it tax}), but richer tokenizations are also possible: for
example, we find it useful to track common {\it n}-gram word
combinations (e.g. bigrams {\it pay tax} or {\it cheese pizza} and
trigrams such as {\it too much tax}).  Under a given
tokenization each document is represented as $\bm{x}_i =
[x_{i1},\ldots,x_{ip}]'$, a sparse vector of counts for each of $p$
tokens in the vocabulary.  These token counts, and the associated
frequencies $\bm{f}_i = \bm{x}_i/m_i$ where $m_i = \sum_{j=1}^p
x_{ij}$, are then the basic data units for statistical text analysis.
In particular, the multinomial distribution for $\bm{x}_i$ implied by an
assumption of token-exchangeability can serve as the basis for
efficient dimension reduction.

Consider $n$ documents that are each annotated with a single sentiment
variable, $y_i$ (e.g., restaurant reviews accompanied by a one to five
star rating).  A naive approach to text-sentiment prediction would be
to fit a generic regression for $y_i |\bm{x}_i$.  However, given the
very high dimension of text-counts (with $p$ in the thousands or tens
of thousands), one cannot efficiently estimate this conditional
distribution without also taking steps to simplify $\bm{x}_i$.  We
propose an inverse regression (IR) approach, wherein the {\it inverse
  conditional distribution} for text given sentiment is used
to obtain low dimensional document scores that preserve information
relevant to $y_i$.

As an introductory example, consider the text-sentiment contingency table
built by collapsing token counts as $\bm{x}_y = \sum_{i:y_i=y}
\bm{x}_i$ for each $y \in \mc{Y}$, the
support of an ordered discrete sentiment variable.  A basic multinomial
inverse regression (MNIR) model is then
\begin{equation} \label{basic-mnir} \bm{x}_y \sim \mr{MN}(\bm{q}_y,
  m_y)~~\text{with}~~ q_{yj} = \frac{\exp[\alpha_j +
    y\varphi_j]}{\sum_{l=1}^p \exp[\alpha_l + y\varphi_l
    ]},~~\text{for}~~j=1,\ldots,p,~~y \in \mc{Y}
\end{equation}
where each $\mr{MN}$ is a $p$-dimensional multinomial distribution
with size $m_y = \sum_{i:y_i=y} m_i$ and probabilities $\bm{q}_y = [q_{y1},\ldots,q_{yp}]'$
that are a linear function of $y$ through a logistic link.  Under
conditions detailed in Section \ref{model}, the {\it
  sufficient reduction} (SR) score for $\bm{f}_i = \bm{x}_i/m_i$
is then
\begin{equation}\label{basic-sr}
z_i = \bs{\varphi}'\bm{f}_i ~\Rightarrow~ y_i\indep\bm{x}_i, m_i\mid z_i.
\end{equation}
Hence, given this SR projection, full $\bm{x}_i$ is ignored and
modeling the text-sentiment relationship becomes a univariate
regression problem.  This article's examples include linear,
$\ds{E}[y_i] = \beta_0 + \beta_1z_i$, quadratic, $\ds{E}[y_i] =
\beta_0 + \beta_1z_i + \beta_2z_i^2$, and logistic, $\mr{p}(y_i < a) =
\left(1 + \exp[\beta_0 + \beta_1z_i]\right)^{-1}$, forms for this {\it
  forward regression}, and SR scores should be straightforward to
incorporate into alternative regression models or structural equation
systems.  The procedure rests upon assumptions that allow for summary
tables wherein the text-sentiment relationship of interest can be
modeled as a logistic multinomial, but when such assumptions are
plausible, as we find common in text analysis, they introduce
information that should yield significant efficiency gains.

In estimating models of the type in (\ref{basic-mnir}), which 
involve many thousands of parameters, we propose use of fat-tailed and
sparsity-inducing independent Laplace priors for each coefficient 
$\varphi_j$.  To account for uncertainty about the appropriate level
of variable-specific regularization, each Laplace rate parameter
$\lambda_j$ is left unknown with a gamma hyperprior.  Thus,
for example,
\begin{equation}\label{basic-prior}
\pi(\varphi_j,\lambda_j) =
\frac{\lambda_{j}}{2}e^{-\lambda_{j}|\varphi_{j}|}
\frac{r^s}{\Gamma(s)}\lambda_j^{s-1}e^{-r\lambda_{j}},~~s,r,\lambda_j>0, 
\end{equation}
independent for each $j$ under a $\mr{Ga}(s,r)$ hyperprior
specification.  This departure from the usual shared-$\lambda$ model
is motivated in Section \ref{model}.3.

Fitting MNIR models is tough for reasons beyond the usual difficulties
of high dimension regression -- simply evaluating the large-response
likelihood is expensive due to the normalization in calculating each
$\bm{q}_i$.  As surveyed in Section \ref{estimation}, available
cross-validation (e.g., via solution paths) and fully Bayesian (i.e.,
through Monte-Carlo marginalization) methods for estimating
$\varphi_j$ under unknown $\lambda_j$ are prohibitively expensive.  A
novel algorithm is proposed for finding the joint posterior maximum
(MAP) estimate of both coefficients and their prior scale.  The
problem is reduced to log likelihood maximization for $\bs{\varphi}$
with a non-concave penalty, and it can be solved relatively
quickly through coordinate descent.  For example, given the prior in
(\ref{basic-prior}), the log likelihood implied by (\ref{basic-mnir})
is maximized subject to (i.e., minus) cost constraints
\begin{equation}\label{basic-cost}
c(\varphi_j) = s\log(1 + |\varphi_j|/r)
\end{equation}
for each coefficient.  This provides a powerful new estimation
framework, which we term the {\it gamma-lasso}.  The approach is very
computationally efficient, yielding robust SR scores in less than a
second for documents with thousands of unique tokens.
Indeed, although a full comparison is beyond the scope of this paper,
we find that the proposed algorithm can also be far superior to
current techniques for high-dimensional logistic regression in the
more common large-predictor (rather than large-response) setting.

This article thus includes two main methodological contributions. First,
Section \ref{model} introduces multinomial inverse regression as an IR
procedure for predictor sets that can be represented as draws from a
multinomial, and details its application to text-sentiment analysis.  This
includes full model specification and general sufficiency results, guidelines
on how text data should be handled to satisfy the MNIR model assumptions, and
our independent gamma-Laplace prior specification.  Second, Section
\ref{estimation} develops a novel approach to estimation in very high
dimensional logistic regression.  This includes details of coordinate descent
for joint MAP estimation of coefficients and their unknown variance,  and an
outline of estimator properties from the perspective of the literature on non-
concave likelihood penalization.  As background, Section \ref{background}
briefly surveys the literature on text mining and sentiment analysis, and on
dimension reduction and inverse regression.

The following section describes language pre-processing and
introduces two datasets that are used throughout to motivate and
illustrate our methods.  Performance comparison and detailed results
for these examples are then presented in Section \ref{results}.  Both
example datasets, along with all implemented methodology, are
available in the \code{textir} package for \code{R}.

\subsection{Data processing and examples}

Text is usually initially cleaned according to some standard
information retrieval criteria, and we refer the reader to
\citet{JuraMart2009} for an overview.  In this article, we simply
remove a limited set of stop words (e.g., {\it and} or {\it but}) and
punctuation, convert to lowercase, and strip suffixes from roots
according to the Porter stemmer \citep{Port1980}.  The main data
preparation step is then to parse clean text into informative language
tokens; as mentioned in the introduction, counts for these tokens are
the starting point for statistical analysis.  Most commonly
\citep[see, e.g.,][]{SrivSaha2009} the tokens are just words, such
that each document is treated as a vector of word-counts.  This is
referred to as the {\it bag-of-words} representation, since these
counts are summary statistics for language generated by exchangeable
draws from a multinomial `bag' of word options.

Despite its apparent limitations, the token-count framework can be
made quite flexible through more sophisticated tokenization.  For
example, in the {\it$N$-gram} language model words are drawn from a Markov
chain of order $N$ \citep[see, e.g.,][]{JuraMart2009}.  A document is
then summarized by its length-$N$ word sequences, or $N$-gram tokens,
as these are sufficient for the underlying Markov transition
probabilities.  Our general practice is to count common unigram,
bigram, and trigram tokens (i.e., words and 2-3 word phrases).
Another powerful technique is to use domain-specific knowledge to
parse for phrases that are meaningful in the context of a specific
field.  \citet{TallOKan2012} present one such approach for
tokenization of legal agreements; for example, they use any
conjugation of the word {\it act} in proximity of {\it God} to
identify a common {\it Act of God} class of carve-out provisions.
Finally, work such as that of \citet{PoonDomi2009} seeks to parse
language according to semantic equivalence.

Thus while we focus on token-count data, different language models are
able to influence analysis through tokenization rules.  And although
separation of parsing from statistical modeling limits our ability to
quantify uncertainty, it has the appealing effect of allowing text
data from various sources and formats to all be analyzed within a
multinomial likelihood framework.

\subsubsection{Ideology in political speeches}

This example originally appears in \citet[GS;][]{GentShap2010} and
considers text of the 109$^\text{th}$ (2005-2006) Congressional
Record.  For each of the 529 members of the United States House and
Senate, GS record usage of phrases in a list of 1000 bigrams and
trigrams.  Each document corresponds to a single person.  The
sentiment of interest is political partisanship, where party
affiliation (Republican, Democrat, or Independent) provides a simple
indicator and a higher-fidelity measure is calculated as the two-party
vote-share from each speaker's constituency (congressional district
for representatives; state for senators) obtained by George W. Bush in
the 2004 presidential election. Note that token vocabulary in this
example is influenced by sentiment: GS built contingency tables for
bigram and trigram usage by party, and kept the top 1000 `most
partisan' phrases according to ranking of their Pearson $\chi^2$-test
statistic.

Define phrase frequency {\it lift} for a given group as
$\bar{f}_{j\mr{G}}/ \bar{f}_{j}$, where $\bar{f}_{j\mr{G}}$ is mean
frequency for phrase $j$ in group $\mr{G}$ and $\bar{f}_{j} =
\sum_{i=1}^n f_{ij}/n$ is the average across all documents.  The
following tables show top-five lift phrases used at least once by each
party.

\begin{center}\sgl\footnotesize
\begin{minipage}{2.5in}
\hfill {\sc \normalsize Democratic Frequency Lift}
\vskip .2cm
\hfill \begin{tabular}{r | l }                                   
{\sf  congressional.hispanic.caucu }&  2.163\\
{\sf  medicaid.cut  }& 2.154 \\  
{\sf  clean.drinking.water }& 2.154 \\ 
{\sf  earth.day }& 2.152\\   
{\sf tax.cut.benefit  }& 2.149\\           
\end{tabular}  
\end{minipage}
\hskip .5in
\begin{minipage}{2.5in}
\hfill {\sc\normalsize Republican Frequency Lift}
\vskip .2cm
\hfill \begin{tabular}{r | l }                                   
{\sf  ayman.al.zawahiri }&  1.850 \\
{\sf  america.blood.cent }& 1.849 \\  
{\sf  million.budget.request }& 1.847 \\ 
{\sf  million.illegal.alien }& 1.846 \\   
{\sf  temporary.worker.program }& 1.845\\           
\end{tabular}  
\end{minipage}
\end{center}

\subsubsection{On-line restaurant reviews}

This dataset, which originally appears in the topic analysis of
\citet{MauaCozm2009}, contains 6260 user-submitted restaurant reviews
(90 word average) from  {\tt www.we8there.com}.  The reviews
are accompanied by a five-star rating on four specific aspects of
quality -- {\it food}, {\it service}, {\it value}, and {\it
  atmosphere} -- as well as the {\it overall experience}.  After
tokenizing the text into bigrams (based on a belief that modifiers
such as {\it very} or {\it small} would be useful here), we discard
phrases that appear in less than ten reviews and documents which do
not use any of the remaining phrases.  This leaves a dataset of 6147
review counts for a token vocabulary of 2978 bigrams.  Top-five lift
phrases that occur at least once in both positive ({\it overall experience}
$>$ 3) and negative ({\it overall experience} $<$ 3) reviews are
below.

\begin{center}
\sgl\footnotesize
\begin{minipage}{2.5in}
\hfill {\sc \normalsize Negative Frequency Lift}
\vskip .2cm
\hfill \begin{tabular}{r | l }                                   
{\sf  food poison } &  5.402\\
{\sf  food terribl  } & 5.354 \\  
{\sf  one worst } & 5.339 \\ 
{\sf  spoke manag } & 5.318\\   
{\sf after left  } & 5.285\\           
\end{tabular}  
\end{minipage}
\hskip .5in
\begin{minipage}{2.5in}
\hfill {\sc\normalsize Positive Frequency Lift}
\vskip .2cm
\hfill \begin{tabular}{r | l }                                   
{\sf  worth trip  } &  1.393 \\
{\sf  everi week } & 1.390 \\  
{\sf  melt mouth } & 1.389 \\ 
{\sf  alway go } & 1.389 \\   
{\sf  onc week } & 1.389\\           
\end{tabular}  
\end{minipage}
\end{center}

\section{Background}
\label{background}

This section briefly reviews the
relevant literatures on sentiment analysis and inverse
regression. Additional background is in the appendices
and material specific to estimation is in Section \ref{estimation}.

\subsection{Analysis of sentiment in text}

As already outlined, we use {\it sentiment} to refer to any
variables related to document composition.  Although broader
than its common `opinion polarity' usage, this definition as `sensible
quality' fits our need to refer to the variety of quantities that may be
correlated with text.

Much of existing work on sentiment analysis uses word frequencies as
predictors in generic regression and classification algorithms,
including support vector machines, principle components (PC)
regression, neural networks, and penalized least-squares.  Examples
from this machine learning literature can be found in the survey by
\citet{PangLee2008} and in the collection from \citet{SrivSaha2009}.
In the social sciences, research on ideology in political text
includes both generic classifiers \citep[e.g.,][]{YuKaufDier2008} and
analysis of contingency tables for individual terms
\citep[e.g.,][]{LaveBenoGarr2003} \citep[machine learning researchers,
such as][have also made contributions in this area]{ThomPangLee2006}.
In economics, particularly finance, it is more common to rely on
weighted counts for pre-defined lists of terms with positive or
negative {\it tone}; examples of this approach include
\citet{Tetl2007} and \citet{LougMcDo2011} \citep[again, machine
learners such as][have also studied prediction for
finance]{BollMaoZeng2010}.

These approaches all have drawbacks: generic regression does nothing
to leverage the particulars of text data, independent analysis of many
contingency tables leads to multiple-testing issues, and pre-defined
word lists are subjective and unreliable.  A more promising strategy
is to use text-specific dimension reduction based upon the multinomial
implied by exchangeability of token-counts.  For example, a {\it topic
  model} treats documents as drawn from a multinomial distribution
with probabilities arising as a weighted combination of `topic'
factors.  Thus $ \bm{x}_i \sim \mr{MN}(\omega_{i1} \bs{\theta}_{1} +
\ldots +\omega_{iK} \bs{\theta}_{K}, m_i)$, where topics
$\bs{\theta}_k = [\theta_{k1} \cdots \theta_{kp}]'$ and weights
$\bs{\omega}_i$ are probability vectors. This framework, also known as
{\it latent Dirichlet allocation} (LDA), has been widely used in text
analysis since its introduction by \citet{BleiNgJord2003}.

The low dimensional topic-weight representation (i.e.,
$\bs{\omega}_i$) serves as a basis for sentiment analysis
in the original Blei et al.\!  article, and has been used in this way
by many since.  The approach is especially popular in political
science, where work such as that of \citet{Grim2010} and
\citet{QuinMonrColaCresRade2010} investigates political interpretation
of latent topics (these authors restrict $\omega_{ik} \in \{0,1\}$
such that each document is drawn from a single topic).  Recently,
\citet{BleiMcAu2010} have introduced supervised LDA (sLDA) for joint
modeling of text and sentiment.  In particular, they augment topic
model with a forward regression $y_i = f(\bs{\omega}_i)$, such that
token counts and sentiment are connected through shared topic-weight
factors.

Finally, our investigation was originally motivated by a desire to
build a model-based version of the specific {\it slant} indices
proposed by \citet{GentShap2010}, which are part of a general
political science literature on quantifying partisanship through
weighted-term indices \cite[e.g.,][]{LaveBenoGarr2003}.  Appendix A.1
shows that the GS indices can be written as summation of phrase
frequencies loaded by their correlation with measured partisanship
(e.g., vote-share), such that slant is equivalent to first-order
partial least-squares \citep[PLS;][]{Wold1975}.

\subsection{Inverse regression and sufficient reduction}

This article is based on a notion that, given the high dimension of
text data, it is not possible to efficiently estimate conditional
response $y|\bm{x}$ without finding a way to simplify $\bm{x}$.  The
same idea motivates many of the techniques surveyed above, including
LDA and sLDA, PLS/slant, and PC regression.  A framework to unify
techniques for dimension reduction in regression can be found in
Cook's 2007 \nocite{Cook2007} overview of {\it inverse regression},
wherein inference about the multivariate conditional distribution
$\bm{x}|y$ is used to build low dimension summaries for $\bm{x}$. 

Suppose that $\bm{v}_i$ is a $K$-vector of {\it response
  factors} through which $\bm{x}_i$ depends on $y_i$ (i.e., $\bm{v}_i$
is a possibly random function of $y_i$).  Then Cook's linear IR
formulation has $\bm{x}_i = \bs{\Phi} \bm{v}_i + \bs{\epsilon}_i$,
where $\bs{\Phi} = [\bs{\varphi}_1 \cdots \bs{\varphi}_K]$ is a
$p\times K$ matrix of inverse regression coefficients and
$\bs{\epsilon}_i$ is $p$-vector of error terms. Under certain
conditions on $\mr{var}(\bs{\epsilon}_i)$, detailed by Cook, the
projection $\bm{z}_i = \bs{\Phi}'\bm{x}_i$ provides a {\it sufficient
  reduction} (SR) such that $y_i$ is independent of $\bm{x}_i$ given
$\bm{z}_i$.  As this implies $\mr{p}(\bm{x}_i | \bs{\Phi}'\bm{x}_i,
y_i) = \mr{p}(\bm{x}_i | \bs{\Phi}'\bm{x}_i)$, SR corresponds to the
classical definition of sufficiency for `data' $\bm{x}_i$ and
`parameter' $y_i$, but is conditional on unknown $\bs{\Phi}$ that must
be estimated in practice.  When such estimation is feasible, the
reduction of dimension from $p$ to $K$ should make these SR {\it
  projections} easier to work with than the original predictors.

Many approaches to dimension reduction can be understood in context of
this linear IR model: PC directions arise as SR projections for the
maximum likelihood solution when $\bm{v}_i$ is unspecified \citep[see,
e.g.,][]{Cook2007} and, following our discussion in A.1, the first PLS
direction is the SR projection for least-squares fit when $\bm{v}_i =
y_i$.  A closely related framework is that of factor analysis, wherein
one seeks to estimate $\bm{v}_i$ directly rather than project
$\bm{x}_i$ into its lower dimensional space.  By augmenting estimation
with a forward model for $y_i | \bm{v}_i$ researchers are able to
build {\it supervised factor models}; see, e.g., \citet{West2003}.

The innovation of inverse regression, from Cook's 2007 paper and in
earlier work including \citet{Li1991} and \citet{BuraCook2001}, is to
investigate the SR projections that result from explicit specification
for $\bm{v}_i$ as a function of $y_i$.  Cook's {\it principle fitted
  components} are derived for a variety of functional expansions of
$y_i$, \citet{LiCookTsai2007} interprets PLS within an IR framework,
and the {\it sliced inverse regression} of \citet{Li1991} defines
$\bm{v}_i$ as a step-function expansion of $y_i$.  Since in each case
the $\bm{v}_i$ are conditioned upon, these IR models are more
restrictive than the random joint forward-inverse specification of
supervised factor models.  But if the IR model assumptions are
satisfied then its parsimony should lead to more efficient inference.

Instead of a linear equation, dimension reduction for text data is
based on multinomial models.  Following the topic model factor
specification, LDA is akin to PC analysis for multinomials and sLDA is
the corresponding supervised factor model.  However, existing work on
non-Gaussian inverse regression relies on conditional independence;
for example, \citet{CookLi2009} use single-parameter exponential
families to model each $x_{ij}|\bm{v}_i$.  To our knowledge, no-one
has investigated SR projections based on the multinomial predictor
distributions that arise naturally for text data.  Hence, we seek to
build a multinomial inverse regression framework.

\section{Modeling}
\label{model}

The subject-specific multinomial inverse regression model has, for
$i=1,\ldots,n$:
\begin{equation}\label{full-mnir}
  \bm{x}_i \sim \mr{MN}(\bm{q}_i, m_i)~~\text{with}~~
  q_{ij} = \frac{e^{\eta_{ij}}}{\sum_{l=1}^p
    e^{\eta_{il}}},~~j=1,\ldots,p,~~\text{where}~~\eta_{ij} = \alpha_j +
  u_{ij} +
  \bm{v}_i'\bs{\varphi}_j.
\end{equation}
This generalizes (\ref{basic-mnir}) with the introduction of
$K$-dimensional response factors $\bm{v}_i$ and subject effects
$\bm{u}_i = [u_{i1} \cdots u_{ip}]'$.  Section \ref{model}.1 derives
sufficient reduction results for projections $ \bm{z}_im_i =
\bs{\Phi}'\bm{x}_i$, where $\bs{\Phi}' = [\bs{\varphi}_1,\cdots
\bs{\varphi}_p]$.  Section \ref{model}.2 then describes application of
these results in text analysis and outlines situations where (\ref{full-mnir}) can be
replaced with a collapsed model as in (\ref{basic-mnir}).  Finally,
\ref{model}.3 presents prior specification for these very high
dimensional regressions.

\subsection{Sufficient reduction in multinomial inverse regression}

This section establishes classical sufficiency-for-$y$ (conditional on
IR parameters) for projections derived from the 
model in (\ref{full-mnir}).  The main result follows, due to use of a
logit link on $\bs{\eta}_i = [\eta_{i1}\cdots\eta_{ip}]'$, from 
factorization of the multinomial's natural exponential
family parametrization.
\begin{prop}\label{SRprop}
  Under the model in (\ref{full-mnir}), conditional on $m_i$ and
  $\bm{u}_i$ \vspace{-.5cm}\[ y_i \indep \bm{x}_i \mid \bm{v}_i
  \Rightarrow y_i \indep \bm{x}_i \mid \bs{\Phi}'\bm{x}_i.
\]
\end{prop}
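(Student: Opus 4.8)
\medskip
\noindent\textbf{Proof plan.} The plan is to recognize $\bs{\Phi}'\bm{x}_i$ as a classical sufficient statistic for the ``parameter'' $\bm{v}_i$ in the multinomial model (\ref{full-mnir}) and then splice that together with the hypothesized independence $y_i \indep \bm{x}_i \mid \bm{v}_i$; every statement below is conditional on $m_i$ and $\bm{u}_i$, as in the proposition. For the first part I would substitute the logit link into the multinomial mass function and use $\sum_j x_{ij} = m_i$ together with $\sum_j x_{ij}\bm{v}_i'\bs{\varphi}_j = \bm{v}_i'\bs{\Phi}'\bm{x}_i$ to write the likelihood as
\[
\mr{p}(\bm{x}_i \mid \bm{v}_i, m_i, \bm{u}_i) \;=\; h(\bm{x}_i)\;\frac{\exp(\bm{v}_i'\bs{\Phi}'\bm{x}_i)}{(\sum_{l} e^{\eta_{il}})^{m_i}},
\]
where $h(\bm{x}_i)$ collects the multinomial coefficient and the factor $\exp(\sum_j x_{ij}(\alpha_j + u_{ij}))$, which is free of $\bm{v}_i$, and the denominator is free of $\bm{x}_i$. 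By the Neyman--Fisher factorization criterion, $\bs{\Phi}'\bm{x}_i$ is then sufficient for $\bm{v}_i$ given $(m_i,\bm{u}_i)$ -- equivalently, $\bm{x}_i \indep \bm{v}_i \mid \bs{\Phi}'\bm{x}_i$. The structural point is that the canonical (logit) link couples $\bm{v}_i$ and $\bm{x}_i$ only through the bilinear term $\bm{v}_i'\bs{\Phi}'\bm{x}_i$, so all of the $\bm{v}_i$-dependence of the data passes through $\bs{\Phi}'\bm{x}_i$ alone.

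Next I would chain the two independence statements. Since $\bs{\Phi}'\bm{x}_i$ is a deterministic function of $\bm{x}_i$, it suffices to show $\mr{p}(y_i \mid \bm{x}_i) = \mr{p}(y_i \mid \bs{\Phi}'\bm{x}_i)$, from which $y_i \indep \bm{x}_i \mid \bs{\Phi}'\bm{x}_i$ follows at once. Marginalizing over the response factors, the hypothesis $y_i \indep \bm{x}_i \mid \bm{v}_i$ lets me drop $\bm{x}_i$ from $\mr{p}(y_i \mid \bm{v}_i, \bm{x}_i)$, and the sufficiency relation $\bm{v}_i \indep \bm{x}_i \mid \bs{\Phi}'\bm{x}_i$ lets me replace $\mr{p}(\bm{v}_i \mid \bm{x}_i)$ by $\mr{p}(\bm{v}_i \mid \bs{\Phi}'\bm{x}_i)$, giving $\mr{p}(y_i \mid \bm{x}_i) = \int \mr{p}(y_i \mid \bm{v}_i)\,\mr{p}(\bm{v}_i \mid \bs{\Phi}'\bm{x}_i)\,d\bm{v}_i$. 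The same hypothesis also yields $y_i \indep \bs{\Phi}'\bm{x}_i \mid \bm{v}_i$, since $\bs{\Phi}'\bm{x}_i$ is a function of $\bm{x}_i$, so the identical marginalization produces exactly this integral for $\mr{p}(y_i \mid \bs{\Phi}'\bm{x}_i)$, and the two expressions agree.

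I expect the first part to be essentially mechanical once the link is plugged in and the size constraint is used. The step that needs a little care is the second: $\bs{\Phi}'\bm{x}_i$ is a degenerate function of $\bm{x}_i$, so the manipulations should be phrased through regular conditional distributions -- or, following Cook, one can read $y_i \indep \bm{x}_i \mid \bs{\Phi}'\bm{x}_i$ directly off the identity $\mr{p}(\bm{x}_i \mid \bs{\Phi}'\bm{x}_i, y_i) = \mr{p}(\bm{x}_i \mid \bs{\Phi}'\bm{x}_i)$, which the sufficiency of $\bs{\Phi}'\bm{x}_i$ for $\bm{v}_i$ (together with $y_i \indep \bm{x}_i \mid \bm{v}_i$) delivers -- keeping the conditioning on $(m_i,\bm{u}_i)$ fixed throughout.
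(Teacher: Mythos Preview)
Your proposal is correct and follows essentially the same route as the paper: factor the multinomial likelihood through the canonical link to exhibit $\bs{\Phi}'\bm{x}_i$ as a Neyman--Fisher sufficient statistic for $\bm{v}_i$, then marginalize over $\bm{v}_i$ using the hypothesis $y_i \indep \bm{x}_i \mid \bm{v}_i$ to conclude $\mr{p}(y_i \mid \bm{x}_i) = \mr{p}(y_i \mid \bs{\Phi}'\bm{x}_i)$. Your writeup is somewhat more explicit about the chaining step and the degeneracy of conditioning on a function of $\bm{x}_i$, but the argument is the same.
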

\begin{proof}\vspace{-.25cm} 
  Setting $\alpha_{ij} = \alpha_j + u_{ij}$ and suppressing $i$, the
  likelihood is ${m \choose \bm{x}} \exp\left[ \bm{x}'\bs{\eta} -
    A(\bs{\eta})\right] = {m \choose \bm{x}} e^{\bm{x}'\bs{\alpha} }
  \exp\left[(\bm{x} '\bs{\Phi})\bm{v} - A(\bs{\eta})\right] =
  h(\bm{x}) g(\bs{\Phi}'\bm{x}, \bm{v})$, where $A(\bs{\eta}) =
  m\log\left[\sum_{j=1}^pe^{\eta_j}\right]$.  Hence, the usual
  sufficiency factorization \citep[e.g.,][2.21]{Sche1995} implies
  $\mr{p}(\bm{x} | \bs{\Phi}'\bm{x} , \bm{v}) = \mr{p}(\bm{x} |
  \bs{\Phi}'\bm{x} )$, and $\bm{v}$ is independent of $\bm{x}$ given
  $\bs{\Phi}'\bm{x}$.  Finally, $\mr{p}(y | \bm{x}, \bs{\Phi}'\bm{x})
  = \int_{\bm{v}} \!\mr{p}(y|\bm{v}) d\mr{P}(\bm{v}| \bs{\Phi}'\bm{x})
  = \mr{p}(y |\bs{\Phi}'\bm{x})$.
\end{proof}

\vspace{-.5cm} 
Second, it is standard in text analysis to control for
document size by regressing $y_i$ onto frequencies rather than
counts.  Fortunately, our sufficient reductions survive this
transformation.
\begin{prop}\label{SRf}
  If $ y_i \indep \bm{x}_i \mid \bs{\Phi}'\bm{x}_i, m_i$ and $\mr{p}(y
  \mid \bm{x}_i) = \mr{p}(y_i \mid \bm{f}_i)$, then $y_i \indep
  \bm{x}_i \mid \bm{z}_i = \bs{\Phi}'\bm{f}_i$.
\end{prop}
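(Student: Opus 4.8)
The plan is to work entirely with conditional laws of $y_i$ and to exploit two information equivalences: knowing $\bm{x}_i$ is the same as knowing $(\bm{f}_i, m_i)$, since $\bm{x}_i = m_i\bm{f}_i$ and $m_i = \sum_{j}x_{ij}$; and knowing $(\bs{\Phi}'\bm{x}_i, m_i)$ is the same as knowing $(\bm{z}_i, m_i)$, since $\bm{z}_i = \bs{\Phi}'\bm{f}_i = \bs{\Phi}'\bm{x}_i/m_i$. Conditioning statements therefore translate freely between the count and frequency coordinate systems, and the whole argument is a short chain of substitutions.

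First I would restate the hypothesis $y_i \indep \bm{x}_i \mid \bs{\Phi}'\bm{x}_i, m_i$ in frequency coordinates as $y_i \indep \bm{f}_i \mid \bm{z}_i, m_i$, that is $\mr{p}(y_i \mid \bm{f}_i, m_i) = \mr{p}(y_i \mid \bm{z}_i, m_i)$. Next I would rewrite the second hypothesis $\mr{p}(y_i\mid\bm{x}_i) = \mr{p}(y_i\mid\bm{f}_i)$ as $\mr{p}(y_i\mid\bm{f}_i, m_i) = \mr{p}(y_i\mid\bm{f}_i)$, again using $\bm{x}_i\leftrightarrow(\bm{f}_i,m_i)$: given the frequencies, document size carries no further information about $y_i$. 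Substituting the second identity into the first yields $\mr{p}(y_i\mid\bm{z}_i, m_i) = \mr{p}(y_i\mid\bm{f}_i)$.

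The decisive observation is then that the right-hand side of this identity does not involve $m_i$, so the left-hand side cannot either; hence $\mr{p}(y_i\mid\bm{z}_i, m_i) = \mr{p}(y_i\mid\bm{z}_i)$, i.e. $y_i\indep m_i \mid \bm{z}_i$, and in particular $\mr{p}(y_i\mid\bm{z}_i) = \mr{p}(y_i\mid\bm{f}_i)$. Chaining with the second hypothesis, $\mr{p}(y_i\mid\bm{x}_i) = \mr{p}(y_i\mid\bm{f}_i) = \mr{p}(y_i\mid\bm{z}_i)$, and since $\bm{z}_i = \bs{\Phi}'\bm{f}_i$ is a deterministic function of $\bm{x}_i$, this is exactly $y_i\indep\bm{x}_i\mid\bm{z}_i$.

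The only delicate step --- and the main obstacle --- is the elimination of $m_i$: writing $\mr{p}(y_i\mid\bm{z}_i,m_i)$ as a measurable function $\psi(\bm{z}_i,m_i)$ and $\mr{p}(y_i\mid\bm{f}_i)$ as $\chi(\bm{f}_i)$, the identity $\psi(\bm{z}_i,m_i)=\chi(\bm{f}_i)$ holds only almost surely, so concluding that $\psi$ does not depend on its second argument requires the joint support of $(\bm{f}_i,m_i)$ to be rich enough --- for instance, that fixing $\bm{z}_i$ while varying $m_i$ stays within the support. Under a mild positivity condition on the joint law this is automatic; equivalently, one can package the two substitutions as a single application of the intersection property of conditional independence to $y_i\indep\bm{f}_i\mid(\bm{z}_i,m_i)$ and $y_i\indep m_i\mid(\bm{z}_i,\bm{f}_i)$, the latter being the second hypothesis since $\sigma(\bm{z}_i,\bm{f}_i)=\sigma(\bm{f}_i)$, followed by decomposition. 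Everything else is routine bookkeeping about equivalent conditioning $\sigma$-fields.
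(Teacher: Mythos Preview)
Your argument is correct and rests on the same core observation as the paper's proof --- that $m_i$ cannot be recovered from $\bm{f}_i$, so any quantity simultaneously expressible as a function of $\bm{f}_i$ alone and as a function of $(\bm{z}_i,m_i)$ must in fact be free of $m_i$ --- but the packaging differs. The paper invokes classical sufficiency theory: it notes that both $\bm{f}$ and $(\bs{\Phi}'\bm{f},m)$ are sufficient for $y$, appeals to Lehmann--Scheff\'e for the existence of a minimal sufficient statistic $T(\bm{x})$ with $g(\bm{f})=T(\bm{x})=\tilde g(\bs{\Phi}'\bm{f},m)$, and then argues that $\tilde g$ cannot depend on $m$ since $g(\bm{f})$ does not and $m$ is not a function of $\bm{f}$. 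You instead work directly with conditional laws of $y_i$ and frame the elimination of $m_i$ either as the identity $\mr{p}(y_i\mid\bm{z}_i,m_i)=\mr{p}(y_i\mid\bm{f}_i)$ forcing $m_i$-independence, or equivalently as an application of the intersection property to $y_i\indep\bm{f}_i\mid(\bm{z}_i,m_i)$ and $y_i\indep m_i\mid\bm{f}_i$. Your route is more elementary in that it avoids the minimal-sufficiency machinery, and it is more transparent about the regularity needed: you flag the positivity/support condition explicitly, whereas the paper buries the analogous requirement in the cited ``conditions of Lehmann--Scheff\'e''. Both arguments are short, and neither gains real generality over the other; the paper's version is perhaps more in keeping with the classical sufficient-reduction framing of the surrounding section, while yours is closer to the conditional-independence graphoid axioms.
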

\begin{proof}
  We have that each of $\bm{f}$ and $[\bs{\Phi}'\bm{f}, m]$ are
  sufficient for $y$ in $\mr{p}(\bm{x}| y) =
  \mr{MN}(\bm{q},m)\mr{p}(m|y)$.  Under conditions of
  \citet[][6.3]{LehmShef1950}, there exists a minimal sufficient
  statistic $T(\bm{x})$ and functions $g$ and $\tilde g$ such that
  $g(\bm{f}) = T(\bm{x}) = \tilde g(\bs{\Phi}'\bm{f}, m)$.
  Having $\tilde g$ vary with $m$, while $g(\bm{f})$ does not, implies
  that the map $\bs{\Phi}'\bm{f}$ has introduced such dependence.  But
  since $m$ cannot be recovered from $\bm{f}$, this must be false.
  Thus $\tilde g = \tilde g(\bs{\Phi}'\bm{f})$,
  and $\bm{z} = \bs{\Phi}'\bm{f}$ is sufficient for $y$.
\end{proof}

\subsection{MNIR for sentiment in text: collapsibility and random effects}

For text-sentiment response factor specification, we focus on
untransformed $v_i=y_i$ and discretized $v_i = \mr{step}(y_i)$ along
with their analogues for multivariate sentiment.  The former is
appropriate for categorical sentiment (e.g., political party, or 1-5
star rating) and, for reasons discussed below, the latter is used with
continuous sentiment (e.g., vote-share is rounded to the nearest whole
percentage, and in general one can bin and average $y$ by quantiles).
Regardless, our methods apply under generic $\bm{v}(y_i)$ including,
e.g., the expansions of \citet{Cook2007}.

Given this setting of discrete $\bm{v}_i$, MNIR estimation can often
be based on the {\it collapsed} counts that arise by aggregating
within factor level combinations.  For example, since sums of
multinomials with equal probabilities are also multinomial, given
shared intercepts (i.e., $u_{ij}=0$) and writing the support of
$\bm{v}_i$ as $\mc{V}$, the likelihood for the model in
(\ref{full-mnir}) is exactly the same as that from, for $\bm{v} \in
\mc{V}$ with $\bm{x}_\bm{v} = \sum_{i:\bm{v}_i=\bm{v}}\bm{x}_i$ and
$m_\bm{v} = \sum_{i:\bm{v}_i=\bm{v}}m_i$,
\begin{equation}\label{vbin}
\bm{x}_\bm{v} \sim
\mr{MN}(\bm{q}_\bm{v}, m_\bm{v}),~~\text{where}~
  q_{\bm{v}j} = \frac{e^{\eta_{\bm{v}j}}}{\sum_{l=1}^p
    e^{\eta_{\bm{v}l}}} ~~\text{and}~~\eta_{\bm{v}j} = \alpha_j +
\bm{v}\bs{\varphi}_j.
\end{equation}
Since pooling documents in this way leaves only as many `observations'
as there are levels in the support of $\bm{v}_i$, it can lead to
dramatically less expensive estimation.

Under the marginal model of (\ref{vbin}), $\bs{\Phi}$ is the {\it population average}
effect of $\bm{v}$ on $\bm{x}$.  One needs to be careful in when and
how estimates from this model are used in SR projection, since
conditional document-level validity of these results is subject to the
usual collapsibility requirements for analysis of categorical data
\citep[e.g.,][]{BishFienHoll1975}.  In particular, omitted variables
must be conditionally independent of $\bm{x}_i$ given $\bm{v}_i$; this
can usually be assumed for sentiment-related variables (e.g., a
congress person's voting record is ignored given their
party and vote-share).  Covariates that act on $\bm{x}_i$ independent
of $\bm{v}_i$ should be included in MNIR, as part of the equation for
subject effects $\bm{u}_{i}$ (e.g., although it is not considered in
this article, it might be best to condition on geography when
regressing political speech onto partisanship).  The sufficient
reduction result of (\ref{SRprop}) is then conditional on these
sentiment-independent variables, such that they (or their SR
projection) {\it may} need to be used as inputs in forward regression.

It is often unreasonable to assume that known factors account for all
variation across documents, and treating the $\bm{u}_i$ of
(\ref{full-mnir}) as random effects independent of $\bm{v}_i$ provides
a mechanism for explaining such heterogeneity and understanding its
effect on estimation.  Omitting $\bm{u}_i \indep \bm{v}_i$ tends to
yield estimated $\bs{\Phi}$ that is attenuated from its correct
document-specific analogue \citep{GailWieaPian1984}, although the
population-average estimators can be reliable in some settings; for
example, \citet{ZegeLianSelf1985} show consistency for the stationary
distribution effect of covariates when the $\bm{u}_i$ encode temporal
dependence (such as that between consecutive tokens in an $N$-gram
text model).  When their influence is considered negligible, it is
common to simply ignore the random effects in estimation.  In this
article we also consider modeling $e^{u_{ij}}$ as independent gamma
random variables, and use this to motivate a prior in \ref{model}.3
for the marginal random effects in a collapsed table.  Another option
would be to incorporate latent topics into MNIR and parametrize
$\bm{u}_{i}$ through a linear factor model; this is especially
appealing since SR projections onto estimated factor scores could then
be used in forward regression.

This last point -- on random effects and forward regression -- is
important: when $\bs{\Phi}$ is estimated with random effects, Section
\ref{model}.1 only establishes sufficiency of $\bm{z}_i$ conditional
on $\bm{u}_i$.  Marginal sufficiency would follow from
$\mr{p}(\bm{v}_i | \bm{u}_i, \bs{\Phi}'\bm{x}_i) = \mr{p}(\bm{v}_i |
\bs{\Phi}'\bm{x}_i)$, which for $\bm{u}_i \indep \bm{v}_i$ requires
$\bm{u}_i \indep \bs{\Phi}'\bm{x}_i$.  Thus, information about
$\bm{v}_i$ from this marginal dependence is lost when (as is usually
necessary) $\bm{u}_i$ is omitted in regression of $\bm{v}_i$ onto
$\bm{z}_i$.  Section \ref{results} shows that random effects in MNIR
can be beneficial even if they are then ignored in forward regression.
However, SR projection onto parametric representations of $\bm{u}_i$
is an open research interest.

It is clear that there are many relevant issues to consider when
assessing an MNIR model, and it is helpful to have our sentiment
regression problem placed within the well studied framework of
contingency table analysis \citep[e.g.,][is a general
reference]{Agre2002}.  Ongoing work centers on inference according to
specific dependence structures or random effect parametrizations.
However, as illustrated in Section \ref{results}, even very simple
MNIR models -- measuring population average effects -- allow SR
projections that are powerful tools for forward prediction.

\subsection{Prior specification}

To complete the MNIR model, we provide prior distributions for the
intercepts $\bs{\alpha}$, loadings $\bs{\Phi}$, and possible random
effects $\bm{U} = [\bm{u}_{\bm{v}_1} \cdots \bm{u}_{\bm{v}_d}]'$,
where $d$ is the number of points in $\mc{V}$. 

First, each phrase intercept is assigned an independent standard
normal prior, $\alpha_j \sim \mr{N}(0,1)$.  This serves to identify
the logistic multinomial model, such that there is no need to specify
a {\it null} category, and we have found it diffuse enough to
accomodate category frequencies in a variety of text and non-text
examples.  Second, we propose independent Laplace priors for each
$\varphi_{jk}$, with coefficient-specific precision (or `penalty')
parameters $\lambda_{jk}$, such that $\pi(\varphi_{jk}) =
\lambda_{jk}/2\exp(-\lambda_{jk}|\varphi_{jk}|) $ for $j=1\ldots p$
and $k=1\ldots K$.  The implied prior standard deviation for
$\varphi_{jk}$ is $\sqrt{2}/\lambda_{jk}$.  Each $\lambda_{jk}$ is
then assigned a conjugate gamma hyperprior $\mr{Ga}(\lambda_{jk}; s,
r) = r^s/\Gamma(s) \lambda_{jk}^{s-1}e^{-r\lambda_{jk}}$, yielding the
joint gamma-Laplace prior introduced in (\ref{basic-prior}).
Hyperprior shape, $s$, and rate, $r$, imply expectation $s/r$ and
variance $s/r^2$ for each $\lambda_{jk}$.

As an example specification, consider variation in empirical token
probabilities by level of the logical variables `\code{party
  $=$ republican}' for congressional speech and `\code{rating $>$ 3}'
for we8there reviews. Standard deviation of finite $\log(\hat
q_{\tt true,j}/\hat q_{\tt false, j})$ across tokens is 1.9 and 1.4
respectively, and given variables normalized to have $\mr{var}(v)=1$
these deviations in log-odds correspond to a jump of two in $v$ (from
approximately -1 to 1).  Hence, a coefficient standard deviation of
around 0.7, implying $\ds{E}[\lambda_{jk}]= 2$, is at the conservative
(heavy penalization) end of the range indicated by informal data
exploration, recommending the exponential $\mr{Ga}(1,1/2)$ as a
penalty prior specification.  In Section \ref{results} we also consider shapes
of 1/10 and 1/100, thus decreasing $\ds{E}[\lambda_{jk}]$ by two
orders of magnitude, and find performance robust to these
changes.

The above models have, with $s \leq 1$, hyperprior densities for
$\varphi_{jk}$ that are increasing as the penalty approaches zero
(i.e., at MLE estimation).  This strategy has performed well in many
applications, both for text analysis and otherwise, when dimension is
not much larger than $10^3$.  However, in examples with vocabulary
sizes reaching $10^5$ and higher, it is useful to increase both shape
and rate for fast convergence and to keep the number of non-zero term
loadings manageably small.  As an informal practical recipe, if estimated
$\bs{\Phi}$ is less sparse than desired and you suspect overfit,
increase $s$.  Following the discussion in \ref{estimation}.3 on
hyperprior variance and algorithm convergence, if the optimization is
taking too long or getting stuck in a minor mode, multiply both $s$
and $r$ by a constant to keep $\ds{E}[\lambda_{jk}]$ unchanged
while decreasing $\mr{var}[\lambda_{jk}]$.

Finally, we use $\exp[u_{ij}] \sim \mr{Ga}(1,1)$ independent for each
$i$ and $j$ as an illustrative random effect model.  Considering
$e^{u_{ij}}$ as a multiplier on relative odds, its mode at zero
assumes some tokens are inappropriate for a given document, the mean
of one centers the model on a shared intercept, and the fat right tail
allows for occasional large counts of otherwise rare tokens.  Counts
are not immediately collapsable in the presence of random effects, but
assumptions on the generating process for $\bm{x}_i$ unconditional on
$m_i$ can be used to build a prior model for their effect on
aggregated counts: if each $x_{ij}$ is drawn independent from a
Poisson $\mr{Po}(e^{\alpha_j + u_{ij} + \bm{v}_i\varphi_j})$ with
$\exp[u_{ij}]\sim \mr{Ga}(1,1)$, and $n_\bm{v} = \sum_i
\ds{1}_{[\bm{v}_i = \bm{v}]}$, then $x_{\bm{v}j} \sim
\mr{Po}(e^{\alpha_j + u_{\bm{v}j} +\bm{v}\varphi_j})$ with
$\exp[u_{\bm{v}j}]\stackrel{ind}{\sim} \mr{Ga}(n_\bm{v},1)$.  For
convenience, we use a log-Normal approximation to the gamma and
specify $u_{\bm{v},j} \sim \mr{N}(\log(n_{\bm{v}}) -
0.5\sigma_{\bm{v}}^2, \sigma_{\bm{v}}^2)$ with $\sigma_{\bm{v}}^2 =
\log(n_{\bm{v}}+1) - \log(n_{\bm{v}})$.  Note that $\sigma_{\bm{v}}^2
\rightarrow 0$ as $n_\bm{v}$ grows, leading to static $u_{\bm{v},j}$
whose effect is equivalent to multiplying both numerator and
denominator of $\exp[\eta_{\bm{v},j}]/\sum_{l}\exp[\eta_{\bm{v},l}]$
by a constant.  Thus modeling random effects is unnecessary {\it under
  our assumed model} after aggregating large numbers of observations.

\subsubsection{Motivation for independent gamma-Laplace priors}

One unique aspect of this article's approach is the use of independent
gamma-Laplace priors for each regression coefficient $\varphi_{jk}$.
Part of the specification should not be surprising: the Laplace 
provides, as a scale-mixture of normal densities, a widely used robust
alternative to the conjugate normal prior
\citep[e.g.,][]{CarlPolsStof1992}.  It also encourages sparsity in
$\bs{\Phi}$ through a sharp density spike at $\varphi_{jk} = 0$, and
MAP inference with fixed $\lambda_{jk}$ is equivalent to likelihood
maximization under an $L_1$-penalty in the {\it lasso} estimation
and selection procedure of \citet{Tibs1996}.  Similarly, conjugate
gamma hyperpriors are a common choice in Bayesian inference for lasso
regression \citep[e.g.,][]{ParkCase2008}.

However, our use of independent precision for each coefficient, rather
than a single shared $\lambda$, is a departure from standard practice.
We feel that this provides a better representation of prior
utility, and it avoids the overpenalization that can occur when
inferring a single coefficient precision on data with a large
proportion of spurious regressors.  In their recent work on
the Horseshoe prior, \citet{CarvPolsScot2010} illustrate general
practical and theoretical advantages of an independent parameter
variance specification.  As detailed in Section \ref{estimation}, our
model also yields an estimation procedure, labeled the
{\it gamma-lasso}, that corresponds to likelihood maximization under a
specific nonconcave penalty; the estimators thus inherit properties
deemed desirable by authors in that literature \citep[beginning
from][]{FanLi2001}.

Finally, given the common reliance on cross-validation (CV) for lasso
penalty selection, it is worth discussing why we choose to do
otherwise.  First, our independent $\lambda_{jk}$ penalties would
require a CV search of impossibly massive dimension.  Moreover, CV is
just an estimation technique and, like any other, is sensitive to the
data sample on which it is applied.  As an illustration, Section
\ref{results}.1 contains an example of CV-selected penalty performing
far worse in out-of-sample prediction than those inferred under a wide
range of gamma hyperpriors.  CV is also not scaleable: repeated
training and validation is infeasible on truly large applications
(i.e., when estimating the model once is expensive).  That said, one
may wish to use CV to choose $s$ or $r$ in the hyperprior; since
results are less sensitive to these parameters than they are to a
fixed penalty, a small grid of search locations should suffice.

\section{Estimation}
\label{estimation}

Following our model specification in Section \ref{model}, the full
posterior distribution of interest is
\begin{equation}
  \mr{p}(\bs{\Phi},\bs{\alpha},\bs{\lambda}, \bm{U} \mid \bm{X}, \bm{V}) \propto
  \prod_{i=1}^n \prod_{j=0}^p 
    q_{ij}^{x_{ij}} \pi(u_{ij})  \mr{N}(\alpha_{j} ; 0, \sigma^2_\alpha)
    \prod_{k=1}^K\mr{GL}(\varphi_{jk}, \lambda_{jk})
\label{post}
\end{equation}
where $q_{ij} = \exp[\eta_{ij}]/\sum_{l=1}^p \exp[\eta_{il}]$ with
$\eta_{ij} = \alpha_j + u_{ij} + \sum_{k=1}^Kv_{ik}\varphi_{jk}$ and
$\mr{GL}$ is our gamma-Laplace joint coefficient-penalty prior
$\mr{Laplace}(\varphi_{jk} ; \lambda_{jk})\mr{Ga}(\lambda_{jk}; r,
s)$.  We only consider here $u_{ij}=0$ or $u_{ij}
\stackrel{ind}{\sim}\mr{N}(0,\sigma^2_{i})$ for $\pi(u_{ij})$,
although sentiment-independent covariates can also be included trivially as
additional dimensions of $\bm{v}_i$.  Note that `$i$' denotes an
observation, but that in MNIR this will often be a combination of
documents after the aggregation of Section \ref{model}.2.

Bayesian analysis of logistic regression typically involves posterior
simulation, e.g. through Gibbs sampling with latent variables
\citep{HolmHeld2006} or Metropolis sampling with
posterior-approximating proposals \citep{RossAlleMcCu2005}.  Despite
recent work on larger datasets and sparse signals
\citep[e.g.,][]{GramPols2010}, our experience is that these methods
are too slow for text analysis applications.  Even the more modest 
goal of posterior maximization presents considerable difficulty:
unlike the usual high-dimension logistic regression examples, where
$K$ is big and $p$ is small, our large response leads to a likelihood
that is expensive to evaluate (due to normalization of each
$\bm{q}_i$) and has a dense information matrix (from
\ref{estimation}.2, $\partial^2\log\mr{LHD} /\partial\varphi_{jk} =
\sum_{i=1}^n m_iv_{ik}^2q_{ij}(1-q_{ij})$, which will not be zero
unless $v_{ik}$ is).  As a result, commonly used path algorithms that
solve over a grid of shared $\lambda$ values \citep[e.g.,][as
implemented in \code{glmnet} for \code{R}]{FrieHastTibs2010} do not
work even for the small examples of this article.

We are thus motivated to develop super efficient estimation for sparse
logistic regression.  The independent gamma-Laplace priors of Section
\ref{model}.3 are the first crucial aspect of our approach: it remains
necessary to choose hyperprior $s$ and $r$, but results are robust enough to
misspecification that basic defaults can be applied. Section
\ref{estimation}.1 derives the gamma-lasso (GL) non-concave penalty that
results from MAP estimation under this prior.  Second, Section
\ref{estimation}.2 describes a coordinate descent algorithm for fast negative
log posterior minimization wherein the GL penalties are incorporated at no
extra cost over standard lasso regression. Lastly, \ref{estimation}.3
considers conditions for posterior log concavity and convergence.

\subsection{Gamma-lasso penalized regression}

Our estimation framework relies upon recognition that optimal
$\lambda_{jk}$ can always be written as a function of $\varphi_{jk}$,
and thus does not need to be explicitly solved for in the joint
objective.
\begin{prop}\label{penprop}
  MAP estimation for $\bs{\Phi}$ and $\bs{\lambda}$ under the
  independent gamma-Laplace prior model in (\ref{post}) is equivalent
  to minimization of the negative log likelihood for $\bs{\Phi}$ subject
  to costs
\begin{equation}\label{cost}
c(\bs{\Phi}) = \sum_{j=1}^p \sum_{k=1}^K c(\varphi_{jk}),~~\text{where}~~
c(\varphi_{jk}) = s\log( 1 + |\varphi_{jk}|/r )
\end{equation}
\end{prop}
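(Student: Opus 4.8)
The plan is a textbook profiling (concentration) argument. Since the penalty vector $\bs{\lambda}$ enters the joint posterior in (\ref{post}) only through the gamma-Laplace factors $\mr{GL}(\varphi_{jk},\lambda_{jk})$, and nowhere in the likelihood or in the priors on $\bs{\alpha}$ and $\bm{U}$, the joint MAP over $(\bs{\Phi},\bs{\lambda})$ can be computed by first maximizing out each $\lambda_{jk}$ for fixed $\varphi_{jk}$, plugging the closed-form maximizer back in, and then optimizing the resulting objective over $\bs{\Phi}$ alone. The proposition then reduces to identifying the concentrated penalty as $\sum_{j,k} s\log(1+|\varphi_{jk}|/r)$, up to an additive constant.

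First I would take $-\log$ of (\ref{post}) and isolate the terms involving $\lambda_{jk}$. Dropping the constant $\log\bigl(r^s/(2\Gamma(s))\bigr)$, the contribution of coordinate $(j,k)$ is
\[ h(\varphi_{jk},\lambda_{jk}) \;=\; -\,s\log\lambda_{jk} \;+\; \lambda_{jk}\bigl(r + |\varphi_{jk}|\bigr), \]
so the full negative log posterior is $\ell(\bs{\Phi},\bs{\alpha},\bm{U}) + \sum_{j=1}^p\sum_{k=1}^K h(\varphi_{jk},\lambda_{jk}) + \text{const}$, where $\ell$ gathers the negative log likelihood together with the (unchanged) $\bs{\alpha}$ and $\bm{U}$ prior terms. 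Because the $h$-terms decouple across $(j,k)$ and nothing else depends on $\bs{\lambda}$, the order of minimization may be interchanged:
\[ \min_{\bs{\Phi},\bs{\lambda}} \Bigl\{ \ell + \textstyle\sum_{j,k} h(\varphi_{jk},\lambda_{jk}) \Bigr\} \;=\; \min_{\bs{\Phi}} \Bigl\{ \ell + \textstyle\sum_{j,k} \min_{\lambda_{jk}>0} h(\varphi_{jk},\lambda_{jk}) \Bigr\}. \]

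Next I would solve the inner one-dimensional problems. For fixed $\varphi_{jk}$, $h(\varphi_{jk},\cdot)$ is strictly convex on $(0,\infty)$ with derivative $-s/\lambda_{jk} + (r+|\varphi_{jk}|)$, so its unique minimizer is the interior point $\hat\lambda_{jk} = s/(r+|\varphi_{jk}|)$, which is finite and positive for every $s,r>0$. Substituting back gives
\[ \min_{\lambda_{jk}>0} h(\varphi_{jk},\lambda_{jk}) \;=\; s\log\!\bigl(r+|\varphi_{jk}|\bigr) - s\log s + s \;=\; s\log\!\Bigl(1+\frac{|\varphi_{jk}|}{r}\Bigr) + \bigl(s\log r - s\log s + s\bigr), \]
and the bracketed term does not depend on $\varphi_{jk}$. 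Collecting terms, the $\bs{\Phi}$-component of the joint MAP is exactly the minimizer of the negative log likelihood (with the unchanged $\bs{\alpha}$, $\bm{U}$ contributions) plus $\sum_{j,k} s\log(1+|\varphi_{jk}|/r) = c(\bs{\Phi})$, and the MAP of each penalty is recovered as $\hat\lambda_{jk} = s/(r+|\hat\varphi_{jk}|)$. This is the claimed equivalence.

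There is no deep difficulty here — the argument is routine concentration — but I would be careful about two bookkeeping points. One is the justification for interchanging the order of minimization, which is legitimate precisely because $\bs{\lambda}$ is absent from $\ell$ and the $h$-terms separate across $(j,k)$. The other is confirming that $\hat\lambda_{jk}$ is a genuine global minimum on the open half-line, not merely a stationary point; this follows from strict convexity of $h(\varphi_{jk},\cdot)$ (equivalently, $h\to\infty$ as $\lambda_{jk}\to 0^+$ and as $\lambda_{jk}\to\infty$). The nonsmoothness of $|\varphi_{jk}|$ at the origin is irrelevant to this step, since $\varphi_{jk}$ is held fixed while optimizing over $\lambda_{jk}$; it only becomes relevant for the coordinate-descent implementation of Section \ref{estimation}.2.
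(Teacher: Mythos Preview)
Your proposal is correct and follows essentially the same profiling argument as the paper: identify the conditional mode $\hat\lambda_{jk}=s/(r+|\varphi_{jk}|)$, substitute it into the negative log gamma-Laplace term $-s\log\lambda_{jk}+(r+|\varphi_{jk}|)\lambda_{jk}$, and simplify to $s\log(1+|\varphi_{jk}|/r)$ plus constants. Your write-up is somewhat more careful than the paper's in justifying the interchange of minimization and in verifying (via strict convexity) that the stationary point is a global minimum, but the route is the same.
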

\begin{proof}
  Under conjugate gamma priors, the conditional posterior mode for
  each $\lambda_{jk}$ given $\varphi_{jk}$ is available as
  $\lambda(\varphi_{jk}) = s/(r + |\varphi_{jk}|)$.  Any joint
  maximizing solution $[\bs{\hat\Phi},\bs{\hat\lambda}]$ for
  (\ref{post}) will thus consist of $\hat{\lambda}_{jk} =
  \lambda(\hat\varphi_{jk})$; otherwise, it is always possible to increase
  the posterior by replacing $\hat\lambda_{jk}$.  Taking the negative
  log of (\ref{basic-prior}) and removing constant terms, the
  influence of a $\mr{GL}(\lambda_{jk},\varphi_{jk})$ prior on the
  negative log posterior is $- s \log(\lambda_{jk}) + (r +
  |\varphi_{jk}|)\lambda_{jk}$, which becomes $-s\log\left[ (s/r) / ( 1
    + |\varphi_{jk}|/r ) \right] + s \propto s\log( 1 +
  |\varphi_{jk}|/r )$ after replacing $\lambda_{jk}$ with
  $\lambda(\varphi_{jk})$.
\end{proof}

The implied penalty function is drawn in the left panel of Figure
\ref{soln}.  Given its shape -- everywhere concave with a sharp spike
at zero -- our gamma-lasso estimation fits within the general
framework of nonconcave penalized likelihood maximization as outlined
in \citet{FanLi2001} and studied in many papers since.  In particular,
$c(\varphi_{jk})$ can be seen as a reparametrization of the
`log-penalty' described in \citet[][eq. 10]{MazuFrieHast2011}, which
is itself introduced in \citet{Frie2008} as a generalization of the
elastic net.  Viewing estimation from the perspective of this
literature is informative.  Like the standard lasso, singularity at
zero in $c(\varphi_{jk})$ causes some coefficients to be set to zero.
However, unlike the lasso, the gamma-lasso has gradient
$c'(\varphi_{jk}) = \mr{sign}(\varphi_{jk}) s/(r + |\varphi_{jk}|)$
which disappears as $|\varphi_{jk}| \rightarrow \infty$, leading to
the property of {\it unbiasedness for large coefficients} listed by
\cite{FanLi2001} and referred to as {\it Bayesian robustness} by
\citet{CarvPolsScot2010}.  Other results from this literature apply
directly; for example, in most problems it should be possible to
choose $s$ and $r$ to satisfy requirements for the strong oracle
property of \citet{FanPeng2004} conditional on their various
likelihood conditions.

It is important to emphasize that, despite sharing properties with
cost functions that are purpose-built to satisfy particular notions of
optimality, $c(\varphi_{jk})$ occurs simply as a consequence of proper
priors in a principled Bayesian model specification.  To illustrate
the effect of this penalty, Figure \ref{paths} shows MAP coefficients
for a simple logistic regression under changes to data and
parameterization.  In each case, gamma-lasso estimates threshold to
zero before jumping to solution paths that converge to the MLE with
increasing evidence.  Figure \ref{soln} illustrates
how these solution discontinuities arise due to concavity in
the minimization objective, an issue that is discussed in detail in
Section \ref{estimation}.3.  Note that although the univariate lasso
thresholds at larger values than the gamma-lasso, in practice we often
observe greater sparsity under GL penalties since large signals are
less biased and single coefficients are allowed to account for the
effect of multiple correlated inputs.  In contrast, standard lasso estimates
also fix some estimates at zero but lead to continuous solution paths
that never converge to the MLE.

\begin{figure}[p]
\hskip -.5cm \includegraphics[width=6.6in]{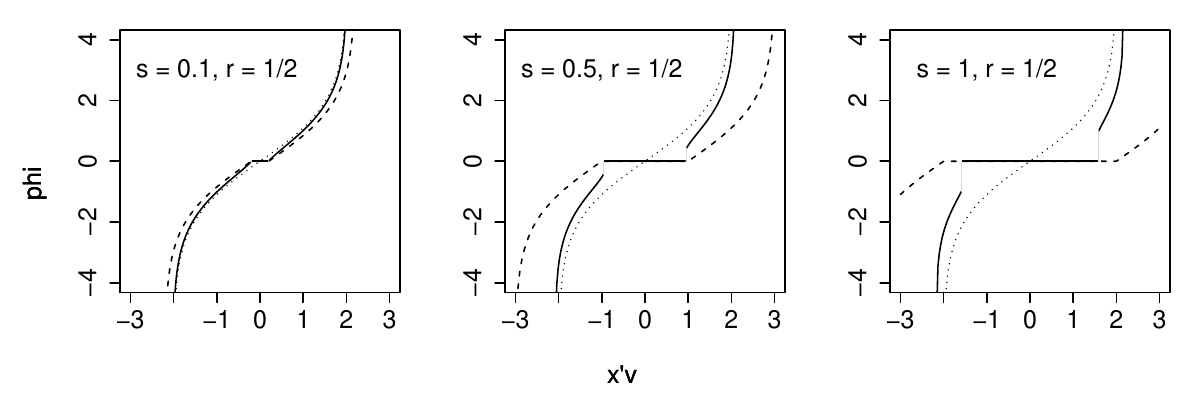}
\vskip -.5cm 
\caption{ \label{paths} Maximizing solutions for univariate logistic
  regression log posteriors $L(\varphi) = \bm{x}'\bm{v}\varphi -
  \sum_i \log\left[1+e^{\varphi v_i}\right] - \mr{pen}(\varphi)$, given
  $\bm{v} = [-1,-1,1,1]'$.  The dotted line is the MLE, with
  $\mr{pen}(\varphi)=0$, the dashed line is lasso $\mr{pen}(\varphi)=
  s|\varphi|/r$, and the solid line is gamma-lasso $\mr{pen}(\varphi) = s
  \log(1 + |\varphi|/r)$. }
\vskip 1cm
\hskip -.3cm \includegraphics[width=6.5in]{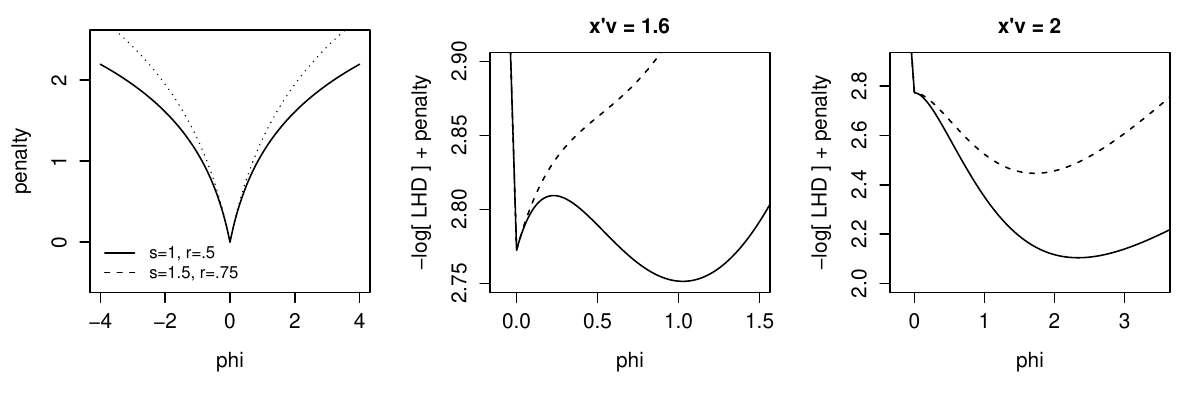}
\vskip -.5cm 
\caption{\label{soln} The left panel shows gamma-lasso penalty
  $s\log(1+|\varphi|/r)$ for $[s,r]$ of $[1,1/2]$ (solid) and
  $[3/2,3/4]$ (dashed).  The right two plots show the corresponding
  minimization objectives, negative log likelihood plus GL penalty,
  near a solution discontinuity in the simple logistic regression of Figure
  \ref{paths}.  }
\vskip 1cm
\includegraphics[width=6.3in]{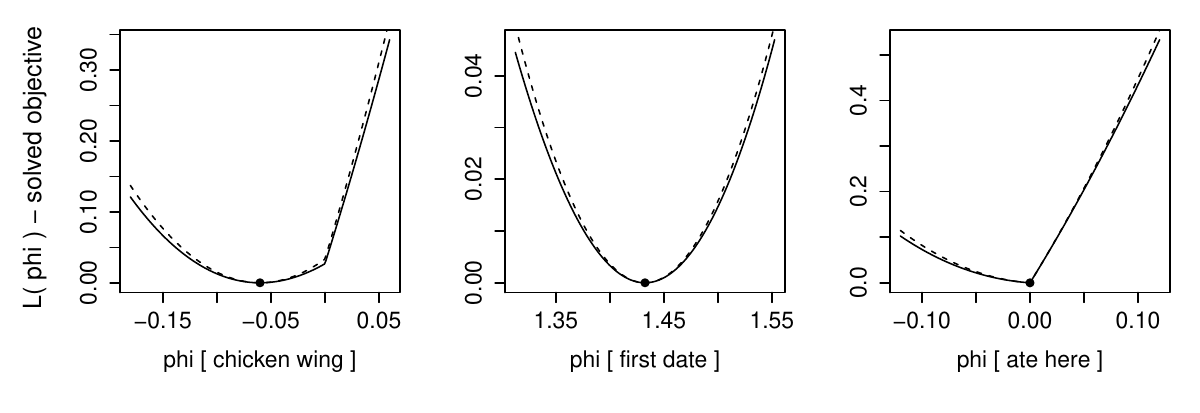}
\vskip -.4cm 
\caption{\label{coef} Coordinate objective functions at convergence in
  regression of we8there reviews onto overall rating. Solid lines are
  the true negative log likelihood and dashed lines are bound
  functions with $\delta = 0.1$.  Both are shown for
  new $\varphi_{j}^\star$ as a difference over the minimum at estimated
  $\varphi_{j}$ (marked with a dot).  }
\end{figure}

\subsection{Negative log posterior minimization by coordinate descent}

Taking negative log and removing constant factors,
maximization equates with minimization of
$l(\bs{\Phi},\bs{\alpha},\bm{U}) + \sum_{j=1}^p (\alpha_j/\sigma_\alpha)^2 -\log
\pi(\bm{U}) + c(\bm{\Phi})$, where $l$ is the strictly convex 
\begin{equation}\label{negllhd}
l(\bs{\Phi},\bs{\alpha},\bm{U}) = -\sum_{i=1}^n\left[\bm{x}_i'(\bs{\alpha} +
  \bs{\Phi}'\bm{v}_i + \bm{u}_i) - m_i\log\left(\sum_{j=1}^p \exp(\alpha_j +
    \bs{\varphi_j}'\bm{v}_i + u_{ij})\right) \right].
\end{equation}
Full parameter-set moves for this problem are prohibitively expensive
in high-dimension due to (typically dense) Hessian storage
requirements.  Hence, feasible algorithms make use of coordinate
descent (CD), wherein the optimization cycles through updates for each
parameter conditional on current estimates for all other parameters
\citep[e.g.,][]{Luen2008}.  Although conditional minima for logistic
regression are not available in closed-form, one can bound the CD
objectives with an easily solvable function and optimize that instead.
In such bound-optimization \citep[also known as
majorization;][]{LangHuntYang2000} for, say, $l(\theta)$, each move
$\theta^{t-1} \rightarrow \theta^t$ proceeds by setting new $\theta^t$
as the minimizing argument to bound $b(\theta)$, where $b$ is such
that previous estimate $\theta^{t-1}$ minimizes $b(\theta) -
l(\theta)$.  Algorithm monotonicity is then guaranteed through the
inequality $l(\theta^{t}) = b(\theta^{t}) + l(\theta^{t}) -
b(\theta^{t} ) \leq b(\theta^{t-1}) - \left[b(\theta^{t-1}) -
  l(\theta^{t-1})\right] = l(\theta^{t-1})$.

Using  $\theta^{\star}$ to denote a new value
for a parameter currently estimated at $\theta$, a quadratic bound for
each element of (\ref{negllhd}) conditional on all others is
available through Taylor expansion as
\begin{equation} \label{likbnd}
b(\theta^\star) = l(\bs{\Phi},\bs{\alpha},\bm{U}) +
g_l(\theta)(\theta^\star-\theta)
+ \frac{1}{2}(\theta^\star-\theta)^2 H_{\theta} 
\end{equation}
where $g_l(\theta) = \partial l /\partial \theta$ is the current coordinate
gradient and $H_{\theta}$ is an upper bound on curvature at the
updated estimate,
$h_l(\theta^\star) = \partial^2 l /\partial \theta^{\star 2}$.  Quadratic bounding
is also used in the logistic regression CD algorithms of
\cite{KrisCariFiguHart2005} and \citet{MadiGenkLewiFrad2005}: the
former makes use of a loose static bound on $h_l$, while the latter
updates $H_{\theta}$ after each iteration to obtain tighter
bounding in a constrained {\it trust-region} $\{\theta^\star \in
\theta \pm \delta\}$ for specified $\delta > 0$.  We have found that
dynamic trust region bounding can lead to an order-of-magnitude fewer
iterations, and Appendix A.2 derives $H_{\theta}$ as the least upper
bound on $h_l(\theta^\star)$ for $\theta^\star$ within $\delta$ of
$\theta$.

In implementing this approach,
coordinate-wise gradient and curvature for $\varphi_{jk}$ are
\vskip -1cm
\begin{equation}\label{lgrad}
g_l(\varphi_{jk}) = \frac{\partial l}{\partial \varphi_{jk}} = -\sum_{i=1}^n v_{ik} (x_{ij}
- m_i q_{ij}) ~~\text{and}~~ h_l(\varphi_{jk}) = \frac{\partial^2 l}{\partial
  \varphi_{jk}^2} = \sum_{i=1}^n m_iv_{ik}^2q_{ij}(1-q_{ij}),
\end{equation}
and similar functions hold for random effects and intercepts but with covariates
of one and without summing over $i$ for random effects.  Then under normal, say
$\mr{N}(\mu_\theta,\sigma_\theta^2)$, priors for $\theta = u_{ij}$ or
$\alpha_i$, the negative log posterior bound is $B(\theta^\star) =
b(\theta^\star) + 0.5(\theta -\mu_\theta)^2/\sigma_\theta^2$ which is
minimized in $\{\theta \pm \delta\}$ at $\theta^\star = \theta -
\mr{sgn}(\Delta \theta)\mr{min}\{|\Delta \theta|, \delta\}$ with
$\Delta \theta = \left[g_l(\theta)
  +(\theta-\mu_{\theta})/\sigma_\theta^2\right]/\left[H_{\theta} +
  1/\sigma_\theta^2\right]$.

Although the GL penalty on $\varphi_{jk}$ is concave and lacks a
derivative at zero, coordinate-wise updates are still available in
closed form. Suppressing the $jk$ subscript, each coefficient update
under GL penalty requires minimization of $ B(\varphi^\star) =
g_l(\varphi)(\varphi^\star-\varphi) +
\frac{1}{2}(\varphi^\star-\varphi)^2 H_{\varphi} + s\log(1 +
|\varphi^\star|/r) $ within the trust region $\left\{\varphi^\star \in
  \varphi \pm \delta : \mr{sgn}(\varphi^\star) =
  \mr{sgn}(\varphi)\right\}$.  This is achieved by finding the roots
of $B'(\varphi^\star) = 0 $ and, when necessary, comparing to the
bound evaluated at zero where $B'$ is undefined.  Setting $
B'(\varphi^\star) = 0$ yields the quadratic equation
\begin{equation}\label{quad}
\varphi^{\star 2} + \left( \mr{sgn}(\varphi)r - \tilde\varphi \right)
\varphi^{\star} + \frac{s}{H_{\varphi}} - \mr{sgn}(\varphi)r\tilde\varphi =0
\end{equation}
with characteristic $\left(\mr{sgn}(\varphi)r + \tilde\varphi\right)^2
- 4s/H_{\varphi} $, where $\tilde\varphi = \varphi -
g_l(\varphi)/H_{\varphi}$ would be the updated coordinate for an MLE
estimator.  From standard techniques, for $\{ \varphi^\star
: \mr{sgn}(\varphi) = \mr{sgn}(\varphi^\star)\}$ this function will
have at most one real minimizing root -- that is, with $H_{\varphi} >
s/\left(r+|\varphi^\star|\right)^2 $.  Hence, each coordinate update
is to find this root (if it exists) and compare $B(\varphi^\star)$ to
$B(0)$.  The minimizing value ($0$ or possible root $\varphi^\star$)
dictates our parameter move $\Delta \varphi$, and this move is
truncated at $\mr{sgn}(\Delta \varphi)\delta$ if it exceeds the trust
region.  Finally, when $\varphi = 0$, repeat this procedure for both
$\mr{sgn}(\varphi) =\pm 1$; at most one direction will lead to a
nonzero solution.

As it is inexpensive to characterize roots for $B'(\varphi^\star)$,
the gamma-lasso does not lead to any noticeable increase in
computation time over standard lasso algorithms
\citep[e.g.,][]{MadiGenkLewiFrad2005}.  Crucially, tests for decreased
objective can performed on the bound function, instead of the full
negative log posterior.  Figure \ref{coef} shows objective and bound
functions around the converged solution for three phrase loadings from
regression of we8there reviews onto overall rating.  With $\delta =
0.1$, $B$ provides tight bounding throughout this neighborhood.
Behavior around the origin is most interesting: the solution for {\it
  chicken wing}, a low-loading negative term, is at $B'(\varphi^\star)
= 0$ just left of the singularity at zero, while {\it ate
  here} falls in the sharp point at zero. The
neighborhood around {\it first date}, a high-loading term, is
everywhere smooth.

\subsection{Posterior log concavity and algorithm convergence}

Since the gamma-lasso penalty is everywhere concave, our minimization
objective is not guaranteed to be convex.  This is illustrated by the
right two plots of Figure \ref{soln}, where a very low-information
likelihood (four observations) can be combined with a relatively
diffuse prior on $\lambda$ ($s=1$, $r=1/2$) to yield concavity near
zero.  The effect of this is benign when the gradient is the same
direction on either side of the origin (as in the right panel of
\ref{soln}), but in other cases it will lead to local minima at zero away
from the true global solution (as in the center panel).  Such
non-convexity is the cause of the discontinuities in the
solution paths of Figure \ref{paths}.

From the second derivative of $l(\varphi_{jk}) + c(\varphi_{jk})$, the
conditional objective for $\varphi_{jk}$ will be concave only if
$h_l(\varphi_{jk} = 0) < s/r^2$ -- that is, if prior variance on $\lambda_{jk}$ is
greater than the negative log likelihood curvature at $\varphi_{jk} =
0$.  In our experience, this problem is rare: the likelihood typically
overwhelms penalty concavity and real examples behave like those shown
in Figure \ref{coef}.  Moreover, although it is possible to show
stationary limit points for CD on such nonconvex functions
\citep[e.g.][]{MazuFrieHast2011}, we advocate avoiding the issue
through prior specification.  In particular, hyperprior shape and rate
can be raised to decrease $\mr{var}(\lambda_{jk})$ while keeping
$\ds{E}[\lambda_{jk}]$ unchanged.  Although this may require more
prior information than desired, it is the amount necessary to have
both fast MAP estimation and estimator stability.  If you want to use
more diffuse priors, you should pay the computational price
of marginalization and mean inference \citep[as in,
e.g.,][]{GramPols2010}.

\section{Examples}
\label{results}

We now apply our framework to the datasets of
Section \ref{intro}.1. The implemented software is available as the
\code{textir} package for \code{R}, with these examples included as
demos.  Section \ref{results}.1 examines out-of-sample predictive
performance, and is followed by individual data analyses.

\subsection{A comparison of text regression methods}

Our prediction performance study considers three text analyses: both
constituent percentage vote-share for G.W. Bush (\code{bushvote}) and
Republican party membership (\code{gop}) regressed onto speech for a
member of the $109^\mr{th}$ US congress, and a user's overall rating
(\code{overall}) regressed onto the content of their we8there
restaurant review.  In each case, we report root mean square error or
misclassification rate over 100 training and validation iterations.
Full results and study details are provided in Appendix A.3, and
performance for a subset of models is plotted in Figure \ref{CV}.
Here, we focus on some main comparisons that can be drawn from the
study.

MNIR is considered under three different hyperprior specifications,
with rate $r=1/2$ and shapes of $s=1/100$, $1/10$, and $1$.  Response
factors are $v_i = y_i$ for \code{gop} and \code{overall}, and $v_i$
is set as $y_i$ rounded by whole number for \code{bushvote} (note that
instead setting $v_i = y_i$ here leads to no discernable improvement).
In each case, MNIR is fit for observations binned by factor level.  We
consider models both with and without independent random effects.  As
predicted, performance is unaffected by random effects for discrete
$y_i$, where we are collapsing together hundreds of observations.
However, they do improve out-of-sample performance by approximately
$1.5\%$ for \code{bushvote}, where only a small number of speakers are
binned at each whole percentage point.  Hence, detailed MNIR results
are reported {with random effects included only for \code{bushvote}.
  Finally, resulting SR scores $z_i = \bs{\varphi}' \bm{f}_i$ are
  incorporated into a variety of forward regression models: linear
  $\ds{E}[y_i] = \alpha + \beta z_i$ and quadratic $\ds{E}[y_i] =
  \alpha + \beta_1 z_i + \beta_2 z_i^2$ for \code{bushvote}, logistic
  $\ds{E}[y_i] = \exp[ \alpha + \beta z_i]/(1 + \exp[ \alpha + \beta
  z_i] )$ for \code{gop}, and linear and proportional-odds logistic
  $\mr{p}(y_i \leq c) = \exp[ \alpha_c - \beta z_i]/(1 + \exp[
  \alpha_c - \beta z_i] )$, $c=1\ldots5$, for \code{overall}.

  Performance is very robust to changes in the MNIR hyperprior.
  Figure \ref{CV} shows little difference between otherwise equivalent
  models using the conservative default $s=1$ and the lowest expected
  penalty $s=1/100$; results for $s=1/10$ are squeezed in-between.  In
  congressional speech examples $s=1/100$ has a slight edge; phrases
  here have already been pre-selected for partisanship and are thus
  largely relevant to the sentiment.  On the other hand, $s=1$ is the
  best performing shape for the we8there example, where phrases were
  only filtered by a minimum document threshold.  Looking at forward
  regressions, the problem specific quadratic \code{bushvote} (see
  Section \ref{results}.2 for justification) and proportional odds
  \code{overall} (accounting for ordinal response) forward regressions
  provide lower average out-of-sample error rates at the price of
  slightly higher variability across iterations, when compared to
  simple linear forward regression.

As comparators, we consider text-specific LDA (both supervised and
standard topic models) as well as an assortment of generic regression
techniques: lasso penalized linear (\code{bushvote} and
\code{overall}) and binary logistic (\code{gop})
regression, with penalty either optimized under our gamma hyperpriors
(\code{gop}), marginalized in MCMC (\code{bushvote}), or
tuned through CV (all examples); first-direction PLS
(\code{bushvote} and \code{overall}); and support vector
machines (\code{gop}).  In {\it every} comparison, gamma-lasso
MNIR provides higher quality predictions with lower run-times.  The
only similar predictive performance was for LDA with 25 and 50 topics
in the \code{bushvote} example, at 15-50 times higher
computational cost.  Note that, given the size of real text analysis
applications, we view the speed and scaleability of MNIR as a primary
strength and only considered feasible alternatives, with short Gibbs runs
for 50 topic sLDA and the Bayesian lasso (7-9 min) at the very high
end of our runtimes.  Moreover, both sLDA and CV lasso
occasionally fail to converge (these runs were excluded);
this never happened for MNIR.

Among comparators, the multinomial topic models outperform generic
alternatives.  Interestingly, LDA combined with simple regression
outperforms sLDA in both congress examples.  Again, this is probably
due to pre-selection of phrases: topics are relevant to ideology
regardless of supervision, and the extra parameters in sLDA are not
worth their cost in degrees of freedom.  Moreover, the simpler LDA
models can be fit with the MAP estimation of \citet{Tadd2012b},
whereas sLDA is applied here through a slow-to-converge Gibbs sampler
(we note that the original sLDA paper uses a variational EM
algorithm).  However, in the we8there data, the extra machinery of
sLDA offers a clear improvement over unsupervised LDA, as should be
the case in many text applications.  Finally, in an important side
comparison, binary logistic regressions were fit for \code{gop}
regressed onto phrase frequencies using both CV and independent gamma
hyperpriors for the lasso penalty.  The scaleable, low-cost,
gamma-lasso yields large performance improvements over a CV optimized
model, regardless of hyperprior specification.

\begin{figure}[p!]
\begin{center}
\vskip -1.5cm
\hspace{-.5cm}\includegraphics[width=6.5in]{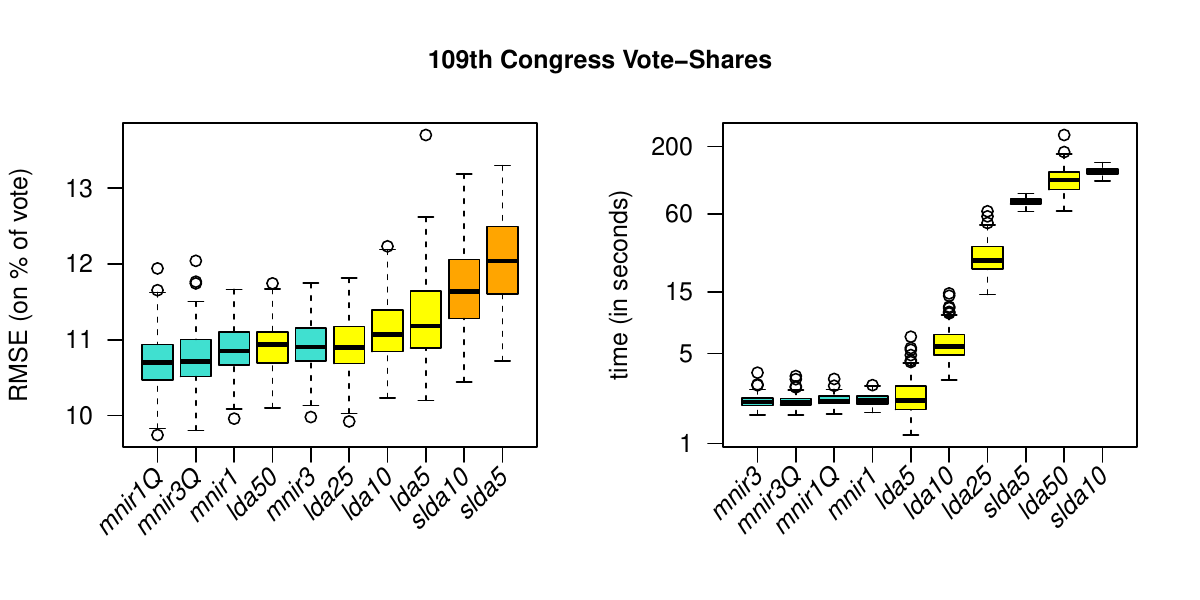}

\vskip -1.5cm
\hspace{-.5cm}\includegraphics[width=6.5in]{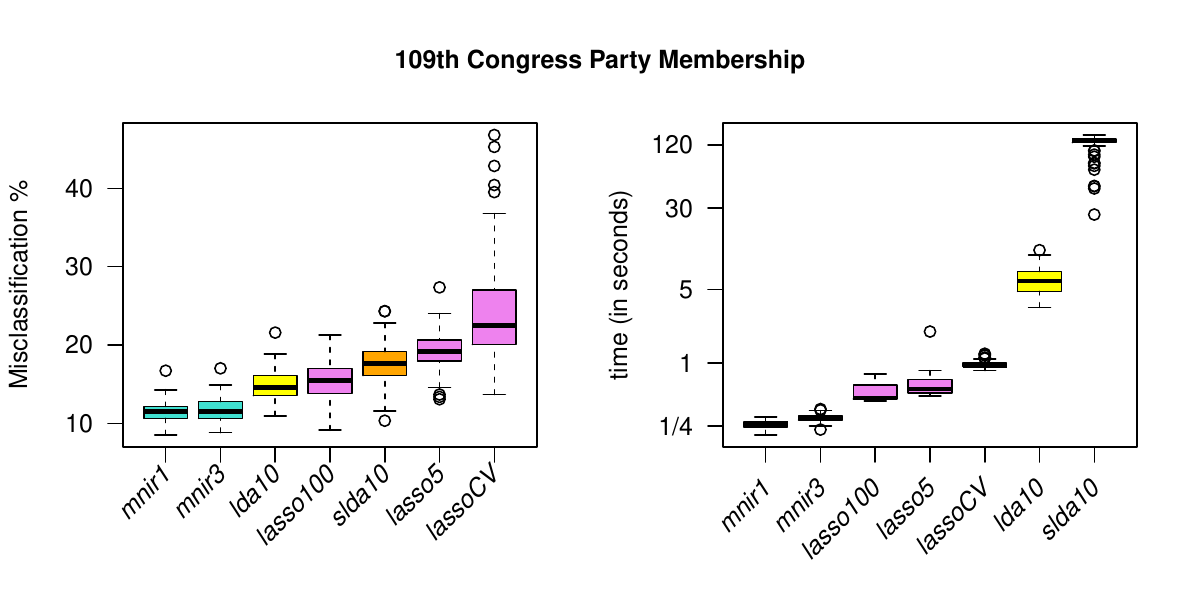}

\vskip -1.5cm
\hspace{-.5cm}\includegraphics[width=6.5in]{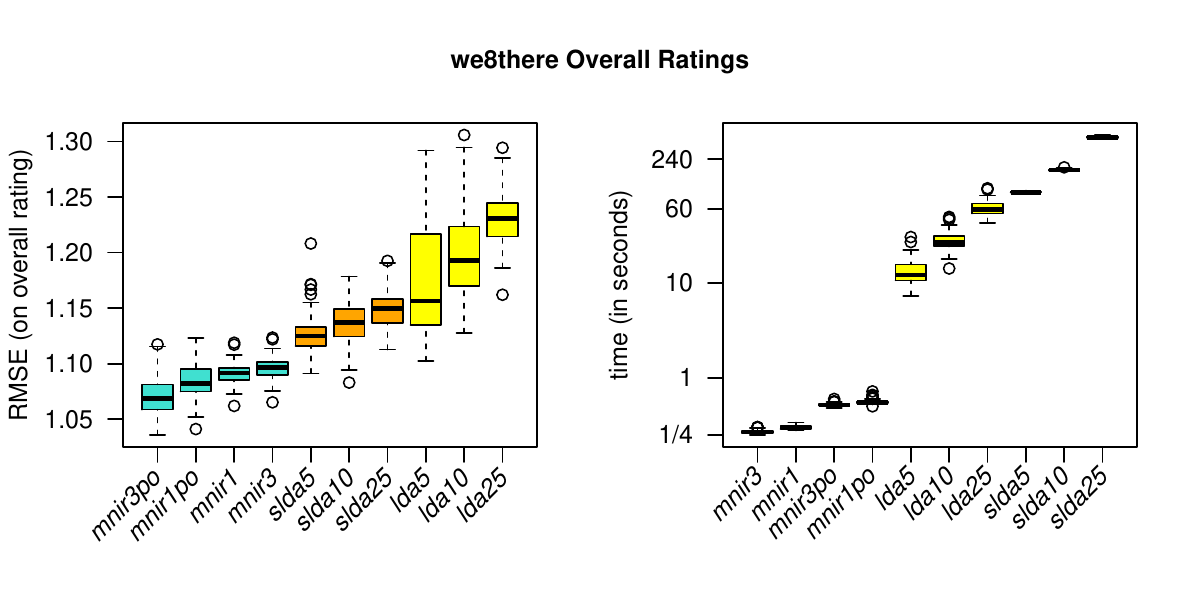}
\end{center}
\vskip -1cm
\caption{\label{CV} Out-of-sample performance and run-times for
  select models.  For MNIR, `Q' indicates quadratic and `po'
  proportional-odds logistic forward regressions, while $\lambda_j$
  prior `1' is $\mr{Ga}(0.01,0.5)$ and `3' is $\mr{Ga}(1,0.5)$.  We
  annotate with the number of topics for (s)LDA, and for binary Lasso
  regressions with either CV or the rate in an exponential penalty
  prior. Full details are in the appendix. }
\end{figure}

\subsection{Application: partisanship and ideology in political speeches}

For the data of Section \ref{intro}.1.1, we have two sentiment
metrics of interest: an indicator for party membership, and each
speaker's constituent vote-share for Bush in 2004.  Since the two
independents caucused with Democrats, the former metric can be
summarized in \code{gop} as a two-party {\it partisanship}.
Following the political economy notion that there should be little
discrepancy between voter and representative beliefs, {\sf\small
  bushvote} provides a measure of {\it ideology} as expressed in
support for G.W. Bush (and lack of support for John Kerry) in the
context of that election.

Figure \ref{sepfit} shows MNIR fit in separate models for each of
\code{gop} and \code{bushvote}, as studied in Section
\ref{results}.1.  For partisanship, fit with $s=1/100$ and $r=1/2$, a
simple univariate logistic forward regression yields clear
discrimination between parties; 8.5\% (45 speakers) are misclassified
under a maximum probability rule.  In the \code{bushvote} MNIR,
fit under the same hyperprior but with inclusion of random effects,
the resulting SR scores $z_i = \bs{\varphi}'\bm{f}_i$ increase quickly
with vote-share at low (mostly Democrat) values and more slowly for
high (mostly Republican) values.  This motivates our quadratic forward
regression for \code{bushvote} onto SR score, the predictive mean
of which is plotted in Figure \ref{sepfit} (with $R^2$ of 0.5).
However, looking at the SR scores colored by party (red for
Republicans, blue Democrats, green independents) shows that this
curvature could instead be explained through different forward
regression slopes by level of \code{gop}, implying that the
relationship between language and ideology is party-dependent.

Given the above, a more useful model might consider text reduction
that allows interaction between party and ideology.  For example, we
can build orthogonal bivariate sentiment factors as \code{gop} and
\code{bushvote} minus the \code{gop}-level means, say \code{votediff}
(again, rounded to the nearest whole percentage).  Figure \ref{bifit}
shows fitted values for such a model, including random effects
and with hyperprior shape increased to $s=1/10$ to reflect a 
preference for smaller conditional coefficients.  In detail, with
$z_{{\sf gop}}$ and $z_{{\sf votediff}}$ the two dimensions of SR
scores from MNIR $\bm{x} \sim \mr{MN}(\bm{q}(v_{{\sf gop}},v_{{\sf
    votediff}}), m)$, normalized for ease of interpretation, the
fitted forward model is
\begin{equation}
\ds{E}[{\sf bushvote}] = 51.9 + 6.2 z_{{\sf gop}} + 5.2 z_{{\sf  votediff}} - 1.9 z_{{\sf gop}} z_{{\sf  votediff}}.
\end{equation}
Thus a standard deviation increase in either SR direction implies a
5-6\% increase in expected vote-share, and each effect is dampened
when the normalized SR scores have the same sign.

\begin{figure}[p!]
\vspace{ -.9in}
\includegraphics[width=6.5in]{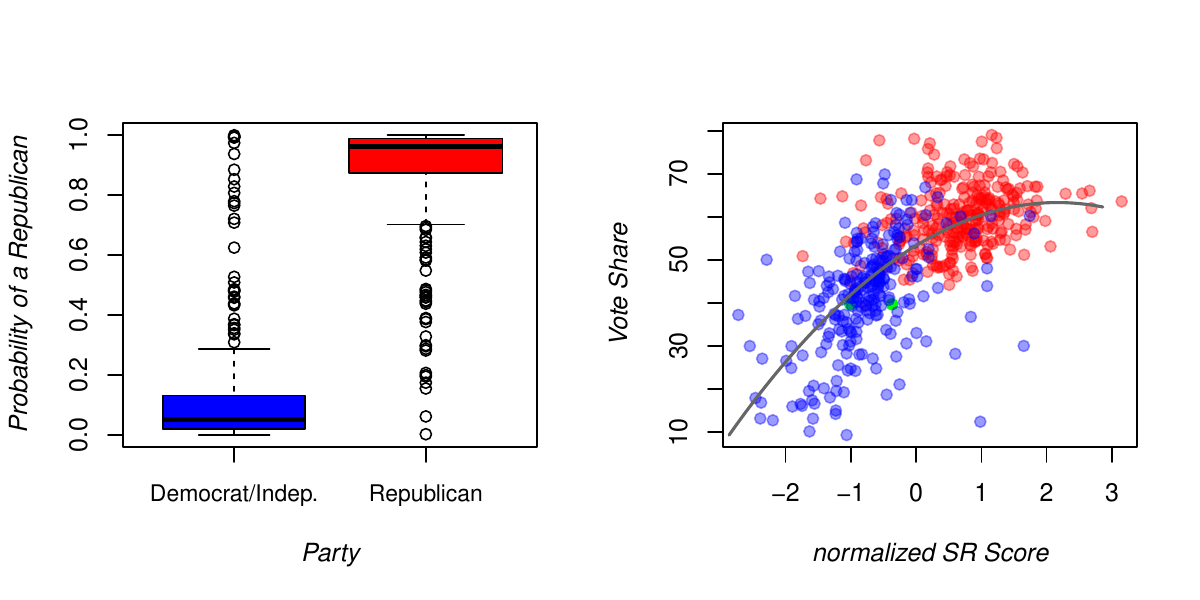}

\vspace{ -.15in}
\caption{\label{sepfit} Separate MNIR fits for congressional speech
  onto each of party and vote-share.  The right shows probabilities
  that each speaker is Republican and the left shows SR scores against
  \code{bushvote}. }

\vspace{ -.4in}
\includegraphics[width=6.5in]{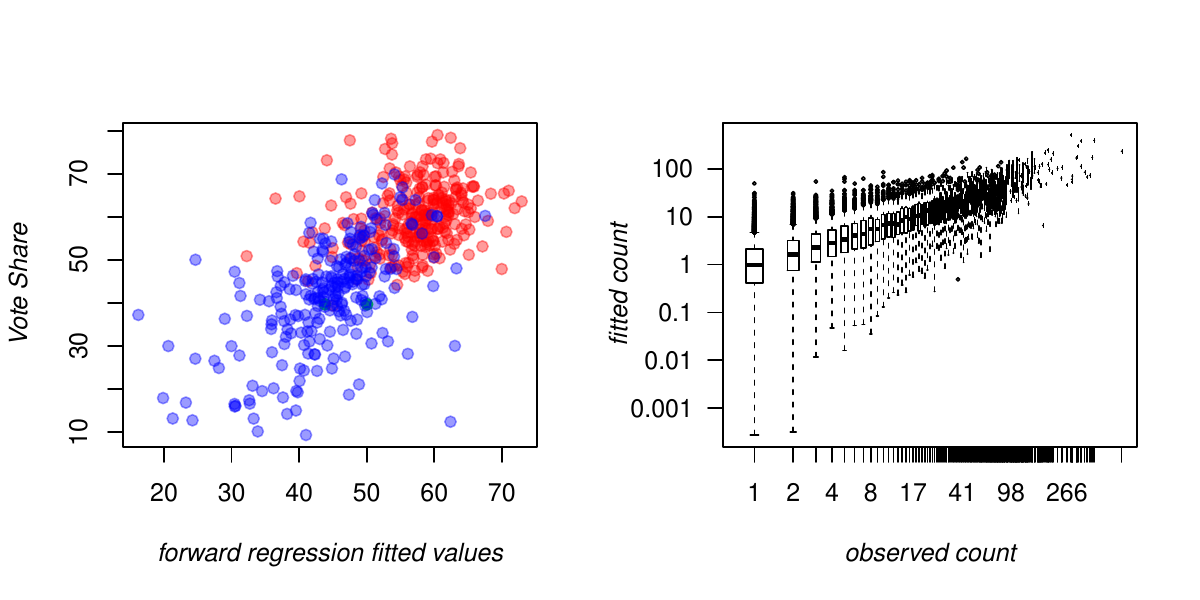}

\vspace{ -.15in}
\caption{\label{bifit} Bivariate ideology and partisanship MNIR.
  The left plot shows fitted values for a forward regression that
  interacts SR scores, and the right shows fitted vs observed token
  counts in MNIR. }

\vspace{ .1in}
\includegraphics[width=6.5in]{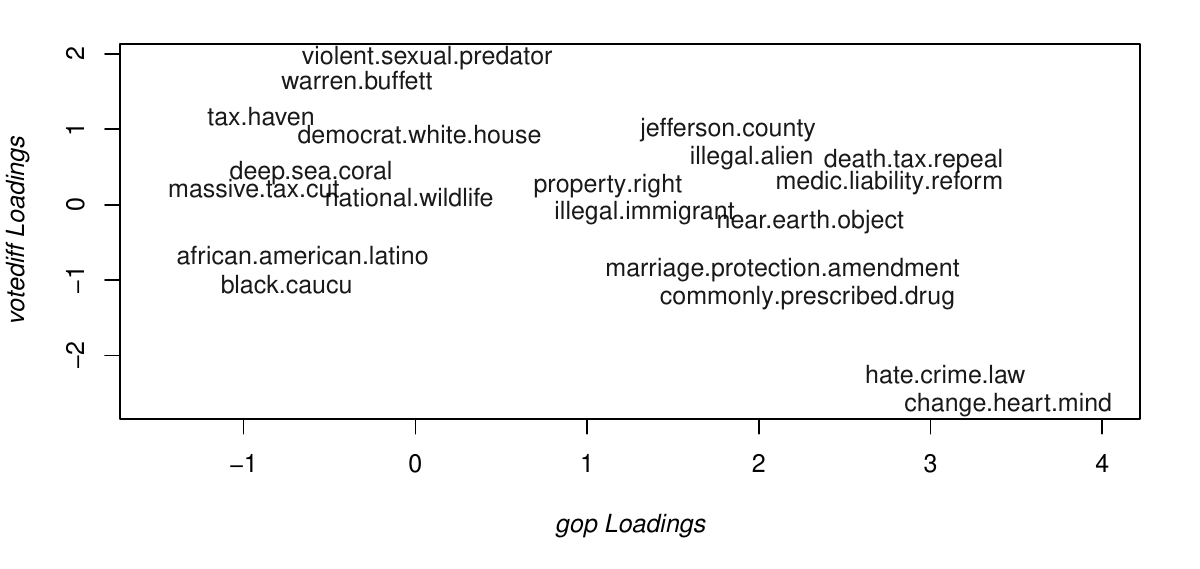}

\vspace{ -.2in}
\caption{\label{congressload} Select congressional speech term loadings
 in bivariate MNIR with party and vote-share.
}

\end{figure}

The right panel of Figure \ref{bifit} shows fitted expected counts
$q_j m$ against true nonzero counts in our bivariate MNIR model fit;
with random effects to account for model misspecification, there
appears to be no pattern of overdispersion.  The only clear outlier in
forward regression is Chaka Fattah (D-PA) with a standardized residual
of -5.2; he uttered a token in our sample only twice: once each for
\code{rate.return} and \code{billion.dollar}.  Finally, Figure
\ref{congressload} plots response factor loadings for a select group
of tokens.  Among other lessons, we see that racial identity rhetoric
(\code{african.american.latino}, \code{black.caucu}) points towards
the left wing of the Democratic party, while discussion of hate crimes
is indicative of a moderate Republican.  A few large loadings are
driven by single observations: for example,
\code{violent.sexual.predator} contributes more than 0.1\% of speech
for only Byron Dorgan, a Democratic Senator in Bush-supporting North
Dakota.  However, this is not the rule and most term loadings affect
many speakers.

\subsection{Application: on-line restaurant reviews}

\begin{figure}[p!]
\vspace{ -.75in}
\hskip -.5cm\includegraphics[width=6.9in]{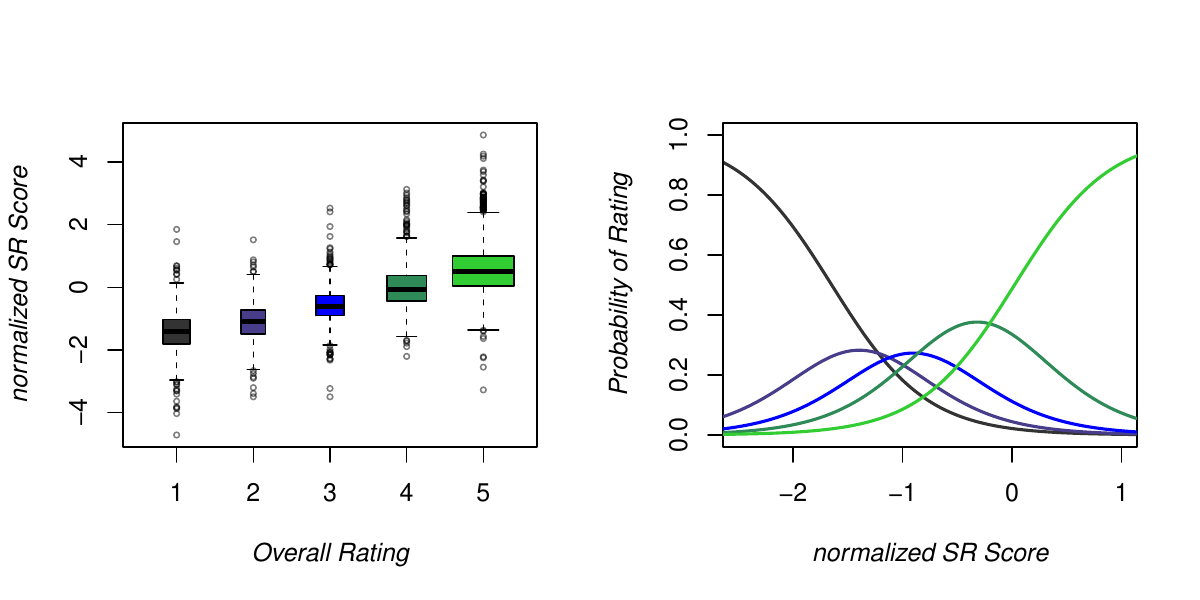}
\caption{\label{we8fit} Sufficient reduction and forward model fit
  for inverse regression of we8there reviews onto the corresponding
  overall rating.  The left plot shows SR score by true review rating, and
  the right shows proportional-odds logistic regression probabilities
  for each rating-level as a function of these SR scores.}

\vskip 1cm
\includegraphics[width=6.5in]{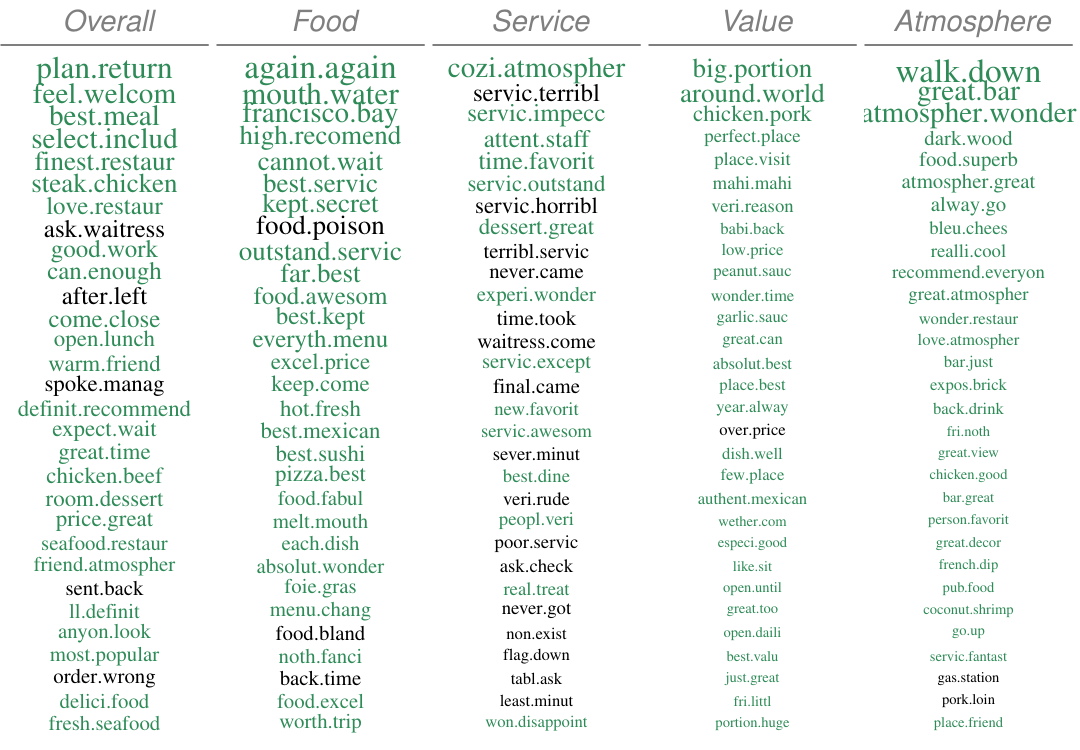}
\caption{\label{we8load} High-loading phrases in each direction
  for regression of we8there reviews onto aspect ratings.  Green
  tokens are
  positive, black are negative, and size is proportional to the absolute
  value of the loading.}
\end{figure}

For the data of Section \ref{intro}.1.2, our sentiment consists of
five correlated restaurant ratings (each on a five point scale) that
accompany every review.  The left panel of Figure \ref{we8fit} shows
MNIR for review content regressed onto the single \code{overall}
response factor, as studied in Section \ref{results}.1.  The true {\sf
  \small overall} rating has high correlation ($0.7$) with our SR
scores, despite considerable overlap between scores across rating
levels.  The right plot of Figure \ref{we8fit} shows probabilities for
each increasing \code{overall} rating category, as estimated in
the proportional-odds logistic forward regression, $\mr{p}({\sf
  overall} \leq c) = \exp[ \alpha_c - \beta z_{\sf overall}]/(1 +
\exp[ \alpha_c - \beta z_{\sf overall}] )$.  Again, $z_{\sf overall}$
is normalized here to have mean zero and standard deviation of one in
our sample.  This model has $\beta=2.3$, implying that the odds of
being at or above any given rating level are multiplied by $e^{2.3}
\approx 10$ for every standard deviation increase in the SR score.

Looking to explore aspect-specific factors, Figure \ref{we8load}
shows top-30 absolute value loadings in MNIR for review token-counts
onto {\it all five} dimensions of sentiment.  Influential terms on either
side of the rating spectrum can be easily connected with elements of a
good or bad meal: \code{plan.return}, \code{best.meal},
and \code{big.portion} are good, while \code{sent.back},
\code{servic.terribl}, and \code{food.bland} are bad.  The
largest loadings appear to be onto overall and food aspects, with
service slightly less important and loadings for value and atmosphere
quickly decreasing in size. This would indicate that the reviews focus
on these elements in that order.

\section{Discussion}
\label{discuss}

The promising results of Section \ref{results} reinforce a basic idea:
a workable inverse specification can introduce
information that leads to more efficient estimation.  Given the
multinomial model as a natural inverse distribution for token-counts,
analysis of sentiment in text presents an ideal setting for inverse
regression.  While the approach of not {\it jointly} modeling a
corresponding forward regression falls short of full Bayesian
analysis, such inference would significantly complicate estimation and
detract from our goal of providing a fast default method for
supervised document reduction.  We are happy to take advantage of
parametric hierarchical Bayesian inference for the difficult MNIR
estimation problem, and suggest that application appropriate
techniques for low-dimensional forward regression should be readily
available.

Although the illustrative applications in this article are quite
simple, the methods scale to far larger datasets.  Collapsing
observations across sentiment factors for MNIR yields massive
computational gains: training data need only include token counts
tabled by sentiment level, and as an example, in \citet{Tadd2012c}
this allows MNIR runs of only a few seconds for 1.6 million twitter
posts scored as positive or negative.  Moreover, we see no reason why
gamma-lasso logistic regression, which was developed specifically for
large response settings, should not be viewed as an efficient option
in generic penalized regression.  Finally, current collaborations that
use MNIR for text analysis include study of partisanship in the US
congressional record from 1873 to present, and an attempt to quantify
the economic content of news in 20 years of Wall Street Journal
editions.  In each case, we are considering a more rigorous treatment
of the identification of single sentiment dimensions and controlling
for related endogenous variables; this work shows MNIR's promise as
the basis for a variety of text related inference goals.

\newpage
\appendix
\section*{Appendix}
\setstretch{1.2}

\subsection*{A.1 Slant and Partial Least Squares}

The GS slant index for document $i$ is $z^\mr{slant}_{i} =
\sum_{j=1}^p b_{j}(f_{ij}-a_j)/\sum_{j=1}^p b_{j}^2$, with parameters
obtained through ordinary least-squares (OLS) as $[a_j,b_j] = \arg
\min_{a,b} \sum_{i=1}^n [f_{ij} - (a +b y_i)]^2$ for $j=0\ldots p$.
Since $b_j = \mr{cov}(f_j,y)/\mr{var}(y)$, slant is equivalent (up to
a uniform shift and scale for all index values) to a weighted sum of
term frequencies loaded by their covariance with $y$.  This is also
the first direction in partial least-squares; see \citet{FranFrie1993}
for statistical properties of PLS and its relationship to OLS, and
\cite{HastTibsFrie2009} for a common version of the algorithm.  Using
the usual normalization applied in PLS, an improved slant measure is
given by $z^\mr{slant}_{i} = \sum_{j=1}^p f_{ij}\mr{cor}(f_j, y_i)$.
For vote-share regressed onto congressional speech in the data of
Section \ref{intro}.1.1, this change
increases within-sample $R^2$ from $0.37$ to $0.57$.

Given $\bm{\hat F} = [ \bm{\hat f}_1 \cdots  \bm{\hat f}_p]$ as a normalized covariate matrix with
mean-zero and variance-one columns, a PLS algorithm which highlights
its inverse regression structure is as follows.
\begin{enumerate} 
\item Set the initial response factor $\bm{v}_0 = \bm{y} = [y_1 \ldots
  y_n]'$, and for $k=1,\ldots,K$:
\begin{itemize}\vspace{-.1cm}\sgl
\item[-] Loadings are $\bs{\varphi}_k =\mr{cor}( \bm{\hat{F}},
  \bm{v}_{k-1}) = [\mr{cor}(\bm{\hat f}_1, \bm{v}_{k-1}) \ldots
  \mr{cor}(\bm{\hat f}_p, \bm{v}_{k-1})]'$.
\item[-] The $k^{th}$ PLS direction
  is $\bm{z}_k = \bs{\varphi}_k'\bm{\hat{F}}$.
\item[-] The new response factors are $\bm{v}_k = \bm{v}_{k-1} -
  [\bm{z}_k'\bm{v}_{k-1}/(\bm{z}_k'\bm{z}_k)]\bm{z}_k$.
\end{itemize}
\vspace{-.3cm}
\item Set $\bm{\hat{y}}$ as OLS fitted values for
  regression of $\bm{y}$ onto $\bm{Z}$, where $\bm{Z} = [\bm{z}_{1}\cdots \bm{z}_{K}]$.
\end{enumerate}
An extra step to normalize and orthogonalize $\bm{z}_k$ with respect
to $[\bm{z}_1 \cdots \bm{z}_{k-1}]$ recovers orthonormal directions, as in
the original PLS algorithm.  Moreover, loading calculations replaced
by $\varphi_{kj} = \arg \min_{\varphi} \sum_{i=1}^n [f_{ij} - (a
+\varphi v_{ki})]^2$ will only scale $\bm{z}_k$ by the variance of
$\bm{v}_k$ and lead to the same forward fit, such that PLS can be
viewed as stagewise inverse regression.

\subsection*{A.2 Trust-region bound for logistic multinomial likelihood }

The bounding used here is essentially the same as in
\citet{GenkLewiMadi2007} but for introduction of dependence upon
$v_{ik}$ that is missing from their version.  
We describe the bound for updates to $\varphi_{jk}$, but it applies directly to $\alpha_j$
or $u_{ij}$ simply by replacing covariate values with one.

Given a trust region of $\varphi_{jk} \pm \delta$, the upper bound on
$h_l(\varphi_{jk}) = \sum_{i=1}^n v_{ik}^2m_i q_{ij}(1-q_{ij})$ is
$H_{jk} = \sum_{i=1}^n v_{ik}^2m_i/F_{ij}$, where each $F_{ij}$ is a
lower bound on $1/(q_{ij} -q_{ij}^2) = 2 + e^{\eta_{ij}+\delta v_{ik}}/E_{ij}
+ E_{ij}/e^{\eta_{ij} +\delta v_{ik}}$, with $E_{ij} = \sum_{l=1}^p
e^{\eta_{il}} - e^{\eta_{ij}}$. This target is
convex in $\delta$ with minimum at $e^{\delta v_{ik}} = 
  E_{ij}/e^{\eta_{ij}}$, such that

 \vspace{-.75cm} 
\begin{equation*}
F_{ij} = \frac{e_{ij}}{E_{ij}} +
\frac{E_{ij}}{e_{ij}} + 2~\text{where}~~e_{ij} = \left\{ 
\begin{array}{cc}
\!\!e^{\eta_{ij}-|v_{ik}|\delta} &\!\!\text{if} ~~E_{ij} < e^{\eta_{ij}-|v_{ik}|\delta}\\
\!\!e^{\eta_{ij}+|v_{ik}|\delta} &\!\!\text{if}~~E_{ij} >  e^{\eta_{ij}+|v_{ik}|\delta}\\
\!\!E_{ij} &\!\!\text{otherwise}.
\end{array}\right.
\end{equation*}
We use unique $\delta_{jk}$ and update
$\delta_{jk}^\star = \mr{max}\{ \delta_{jk}/2, 2|\varphi_{jk}^\star -
\varphi_{jk}|\}$ after each iteration.

\newpage
\hspace{-1in}
\begin{sideways}
\parbox{9.5in}{
\setstretch{1}
\subsection*{A.3 Out-of-Sample Prediction Study Details}
Each model was fit to 100 random data subsets and used to predict on the left-out sample.  Tables report
average root mean square error (RMSE) or percent misclassified (MC\%), the percentage
worse than best on this metric, and run-time in seconds (including
count collapsing in MNIR).  

\vskip .2cm\noindent We use {\sf\smaller R} package implementations:
{\sf\smaller lda} for SLDA \citep{Chan2011}; {\sf\smaller glmnet} for
CV lasso regression \citep{FrieHastTibs2010}; {\sf\smaller monomvn}
for Bayesian lasso \citep{Gram2012}; {\sf\smaller kernlab} for SVM
\citep{KaraSmolaHornZeil2004}; {\sf\smaller textir} for MNIR, LDA,
PLS, and gamma-lasso regression; and {\sf\smaller arm}
\citep{GelmSuYajiHillPittKermZhen2012} for the forward
regression models that accompany MNIR and LDA.  Penalty prior in MNIR
is $\mr{Ga}(s, 1/2)$, (s)LDA Dirichlet precisions are
$1/K$ for topic weights and $1/p$ for token probabilities, and
sLDA assumes a forward error variance of 25\% of marginal
response variance.  Unless otherwise specified, we apply
package defaults.  (S)LDA and MNIR use token counts; all others
regress onto token frequencies.

\vskip .4cm
\noindent {\it Vote Share:} Congressional speech with
two-party vote share (\%) as continuous response, training on 200
and predicting on 329.  Constant mean RMSE is 13.4. MNIR models were fit {\it with} random
effects;  models without random effects are an average of
1.5\% worse on RMSE but 20\% faster.  Bayes lasso uses a $\mr{Ga}$(2,1/10)
prior on $\lambda$ and was run for 200 MCMC iterations after a burn-in
of 100 (refer to 
{\sf\smaller monomvn} for details).

\begin{center}
{\footnotesize 
\begin{tabular}{l|ccc|ccc|ccccc|ccccc|cc|c}
\multicolumn{1}{c}{}
&\multicolumn{3}{c}{MNIR \& Quadratic} 
&\multicolumn{3}{c}{MNIR \& Linear}
&\multicolumn{5}{c}{ LDA \& Linear}
&\multicolumn{5}{c}{ Supervised LDA }
& \multicolumn{2}{c}{ Lasso } &\multicolumn{1}{c}{PLS}\\
& $s$ = $10^{-2}$&$10^{-1}$&$1$ 
& $s$ = $10^{-2}$&$10^{-1}$&$1$ 
& $K$ = 2&5&10&25&50
& $K$ = 2&5&10&25&50
& CV&Bayes&K=1\\
\hline
RMSE&10.7&10.7&10.8&10.9&10.9&10.9&11.7&11.3&11.1&10.9&10.9&12.9&12.1&11.7&12.3&15.1 &13.7&15.7&15.9 \\
\% Worse&0&0&0& 1&1&2&9&6&4&2&2&21&13&9&15&41&28&46&49 \\
Run Time&2.2&2.3&2.1&2.2&2.3&2.1&1.2&2.4 &6.2&29&112&43&75&128&288&508&0.9&410&0.1 \\
\hline
\end{tabular} }
\end{center}

\vskip .2cm
\noindent 
{\it Party Classification:} Congressional speech data with
`Republican' as binary response, training on 200
and predicting on 329.  Null model misclassification rate is 46\%.
MNIR models were fit {\it without} random
effects which lead to the same
misclassification but 40\% longer average run-times.    Lasso and
gamma-lasso are applied in binary logistic
regressions, with shape one and rate $r$ for the latter, and  SVM uses Gaussian kernels
with misclassification cost $C$ (refer to {\sf\smaller
  kernlab} for details).  LDA led to complete separation and
SLDA failed to converge for $K>10$.

\begin{center}
{\footnotesize 
\begin{tabular}{l|ccc|ccc|ccc|c|cccc|ccc}
\multicolumn{1}{c}{}
&\multicolumn{3}{c}{MNIR \& Logistic} 
&\multicolumn{3}{c}{LDA \& Logistic}
&\multicolumn{3}{c}{ Supervised LDA }&\multicolumn{1}{c}{Lasso
}
& \multicolumn{4}{c}{ Gamma-Lasso} &
\multicolumn{3}{c}{ SVM }
\\
& $s$ = $10^{-2}$&$10^{-1}$&$1$ 
& $K$ = 2&5&10
& $K$ = 2&5&10
&CV &$r$ = 5&25&50&100
&$C$=1&100&1000\\
\hline 
MC\%&11&11&12&20&15&15&33&20&18&24&19&17&16&15&37&32&32\\
\%  Worse&0&0&2&76&36&30&188&75&54&115&68&49&42&35&224&182&180\\
Run Time &0.3&0.4&0.3&1.1&2.5&6.3&44&77&126&1.0&0.6&0.5&0.5&0.5&3.1&3.5&3.4 \\
\hline
\end{tabular} }
\end{center}

\vskip .2cm
\noindent 
{\it Restaurant Rating:} We8there reviews with ordinal rating response, training on 2000
and predicting on 4166.  Constant mean RMSE is 1.35. Reported MNIR models 
were fit {\it without} random
effects which lead to equivalent
predictive performance but 15\% longer average run-times.  

\begin{center}
{\footnotesize 
\begin{tabular}{l|ccc|ccc|ccccc|ccccc|c|c}
\multicolumn{1}{c}{}
&\multicolumn{3}{c}{MNIR \& POLR } 
&\multicolumn{3}{c}{MNIR \& Linear }
&\multicolumn{5}{c}{ LDA \& POLR }
&\multicolumn{5}{c}{ Supervised LDA }
& \multicolumn{1}{c}{Lasso}&\multicolumn{1}{c}{PLS}\\
& $s$ = $10^{-2}$&$10^{-1}$&$1$ 
& $s$ = $10^{-2}$&$10^{-1}$&$1$ 
& $K$ = 2&5&10&25&50
& $K$ = 2&5&10&25&50
& CV&$K=1$\\
\hline
RMSE&1.08&1.08& 1.07&1.09&1.09&1.10&1.19&1.17&1.20&1.23&1.23&1.15&1.13&1.14&1.15&1.16&1.24&1.25  \\
\% Worse&1&1&0&2&2&2&12&10&12&15&15&8&5&6&7&8&16&17 \\
Run Time&0.6&0.6&0.5&0.3&0.4&0.3&2.5&13.4&28&61&167&53&90&154&341&651&54&2.2  \\
\hline
\end{tabular} }
\end{center}

}
\thispagestyle{empty}
\end{sideways}

\setstretch{1}\small
\bibliographystyle{chicago}
\bibliography{taddy}

\end{document}